\newcounter{sqindex}
\newcommand{\Rn}{\mathds{R}^{d}}
\newcommand{\R}{\mathds{R}}
\newcommand{\dd}{\mathrm{d}}
\newcommand{\vect}[1]{\mathbf{#1}}
\theoremstyle{plain}
\newtheorem{defn}{Definition}
\newtheorem{lem}{Lemma}
\newtheorem{theorem}{Theorem}
\newtheorem{prop}{Proposition}
\newtheorem{cor}{Corollary}
\theoremstyle{break}
\theoremstyle{nonumberbreak}
\newtheorem{proof}{Proof}
\theoremstyle{nonumberplain}
\theoremstyle{empty}
\title{Doppelg\"anger Model: Emergence of polarization in opinion dynamics}
\author[1]{Vince Campo}
\author[1]{Sebastien Motsch}
\author[2]{Dylan Weber}
\affil[1]{Arizona State University - School of Mathematical and Statistical Sciences, 900 S Palm Walk, Tempe, AZ 85287, USA}
\affil[2]{Changing Character of War Centre, Pembroke College, University of Oxford; Artis International}
\begin{document}

\maketitle
\newpage
\tableofcontents
\begin{abstract}
Over the past decade, contrary to the early popular expectation that large-scale discourse in online communities would foster greater consensus, the large-scale structure of online discourse has been measured to be strongly polarized.  Though it was posited that this effect was driven mainly by the algorithmic curation of content by social platforms (the ``Filter Bubble") it appears that polarization is mainly driven by the tendency of interactions among users to be \textit{homophilic}; users strongly prefer to endorse content and other users that are similar to them.  Users organize into homophilic clusters or ``echo chambers" where interactions within clusters are more likely to be positive and interactions between clusters are more likely to be negative.  This motivates studying polarization in online discourse through modeling as an example of self organized dynamics.  Models of opinion formation on networks have tended to only encode positive interactions (attraction) between users - as a consequence consensus or non-interacting clusters of users are ubiquitous behaviors.  In this research a model of network-based opinion formation is introduced that includes a novel dynamic of negative interaction; instead of modeling negative interaction as repulsion in the opinion space it is modeled as an attraction to a ``Doppelgg\"anger" user.  It is demonstrated that this model exhibits rich behavior. Four distinct regimes of local interaction correspond to different population-wide dynamics - Neutral Consensus, Extreme Consensus, Polarization and Undecided.  In the Polarization regime, local interactions are modeled to be homophilic and the structure of discourse that emerges replicates what is seen in empirical data; clustered users where interactions within clusters is positive and interaction between clusters is negative.
\end{abstract}

\newpage
\section{Introduction}
The internet was and continues to be lauded by many for its role in improving discourse and democratizing information by allowing for more direct discussion among the populace and greatly enhancing access to information generally \cite{rheingold2000virtual, anderson2001universal, fishkin2000virtual, price2002online}.  However at its advent and throughout its history some have cautioned that due to well established psychological tendencies (like confirmation bias and selective exposure), discourse on the internet would actually result in a narrowing of the information landscape of its users.  This view predicted, given the freedom to select information from the virtually infinite array that the internet provides, that users would spend their time on the internet interacting with those who they already agreed with and consuming information from sources that confirmed their prior views.  In other words this view predicted that the structure of online discourse would tend towards "echo chambers" or \textit{homophily} \cite{dean2003net, dahlberg1998cyberspace, buchstein1997bytes, pariser2011filter, kelly2005debate, sunstein2004democracy, republic_com_sunstein}.  The 2016 United States presidential election prompted many empirical studies leveraging large social media datasets looking into the extent to which homophily presents on social media and what effects it might have on its users.  Despite the large variety of methods across this body of work, it is largely unified in its findings; social media has a strongly homophilic structure (see, for example, Figure \ref{fig:empirical}) and this structure contributes to multiple undesirable effects including the narrowing of user information landscapes, enhanced spread of misinformation in homophilic clusters and increased ideological polarization \cite{barbera_birds_2015, barbera_tweeting_2015, garimella_effect_2017, garimella_political_2018, cinelli_echo_2021, monsted_characterizing_2022, schmidt_polarization_2018, bakshy_exposure_2015, bond_quantifying_2015, an_sharing_2014, quattrociocchi_echo_2016, zollo_debunking_2017, bessi_homophily_2016, schmidt_anatomy_2017, gromping_echo_2014, cota_quantifying_2019, cinelli_selective_2020, allcott_social_2017, del_vicario_spreading_2016, vosoughi_spread_2018}.  Many point to the algorithmic curation of content by social platforms as a main culprit, often referred to as the ``Filter Bubble".  This certainly has an effect however, it seems that our own tendency to self-select ideologically aligend content is a more important driver \cite{}.  This motivates the study of the dynamics of online discourse through modeling as an example of \textit{self organized dynamics}; large scale behaviors emerge without a central authority much like the behavior of a flock of birds or shoal of fish \cite{camazine2003self,lopez2012behavioural, castellano_statistical_2009,ballerini2008interaction,cucker_emergent_2007,vicsek2012collective}.

In this arena, models are usually posed in an agent based framework where the "opinion" of an agent is modeled as a continuous value and is influenced, according to a model of local interactions, by those to whom it is connected in a directed network.  Of particular interest is how the topology of the underlying network affects the observed distribution of opinions and vice versa \cite{motsch_heterophilious_2014, saber_consensus_2003, weber_deterministic_2019, li_bounded_2020}.  Often, the interaction models carry an assumption of \textit{local consensus}; if agents interact only with each other than they should agree in some sense.  This assumption can also be interpreted as asserting that there are only attractive forces (equivalently, only positive weights in the interaction network) present among the opinions of the agents.  Additionally, the underlying network topology is often assumed to be homophilic; agents only interact if their opinions are within some threshold of each other.  These assumptions cause the formation of static clusters of agents or a consensus in the long time limit to be a generic behavior \cite{hegselmann_opinion_2002, olfati-saber_consensus_2007, xia_opinion_2011, lorenz_continuous_2007, castellano_statistical_2009, spanos_dynamic_2005, blondel_krauses_2009, ben-naim_unity_2003}. While the interaction networks generated by these types of assumptions have been seen to replicate some of the features of interaction patterns seen in real social media data \cite{schmidt_anatomy_2017, baumann_modeling_2020, del_vicario_spreading_2016}, they usually do not allow for the existence of a repulsive interaction (equivalently, negative weights in the interaction network) between agents.  This is problematic from the modeling point of view as it is well established that a negative interaction can result in a backlash effect where both individuals become further polarized from each other \cite{lord1979biased, zollo_debunking_2017, kuhn1996effects, miller1993attitude, nyhan2010corrections, nyhan2014effective}.  Additionally, under local consensus assumptions, homophily in the interaction network presents as disconnected network components who do not interact, whereas there is evidence that in social media there is significant negative interaction between homophilic clusters \cite{wu_cross-partisan_2021, vasquez_i_2021, gaisbauer_ideological_2021}.

\begin{figure}[H]
  \centering
  \includegraphics[width=0.6\textwidth]{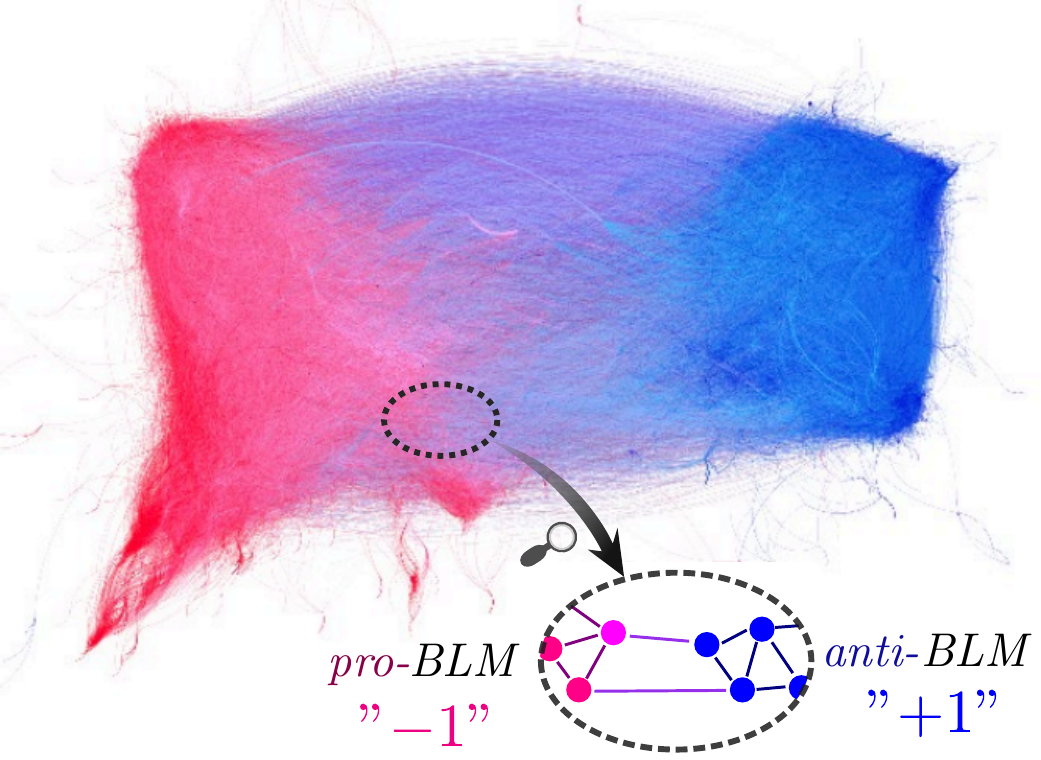}
  \caption{Visualization of the Twitter conversation surrounding the Black Lives Matter movement from May 2023. Nodes in the network are users, edges represent aggregated interactions between users over the data collection period.  Nodes are colored according to their support for or opposition to the Black Lives Matter movement - red indicates support, blue indicates opposition.  Edges are directed and are colored according to the color of the source node.  The network illustrates a strongly homophilic structure.}
  \label{fig:empirical}
\end{figure}

In this paper we introduce and study a class of models that include a mechanism for negative interaction.  Mathematically, inclusion of negative interactions can be problematic; the obvious choice to model them is a repulsion between the interacting agents in the opinion space which can cause the model to diverge.  In this work we introduce a dynamic to address this difficulty; when one agent (the ``interactor") directs a negative interaction at another agent (the ``interactee"), instead of the interactee feeling a ``push" on their opinion away from the interactor's, they feel a ``pull" towards the opinion that is exactly opposite of the interactor's opinion.  Since the interactee is being attracted to an opinion that is similar but opposite of the opinion of the interactor, we dub this class of models the "Doppelg\"anger" model.
We first study a version of the model where the interaction network is static.  We find that in all cases the model remains bounded and further show that if the adjacency matrix associated to the interaction network has a dominant eigenvalue then the model converges to the eigenvector associated to the dominant eigenvector (known as the Perron-Frobenius eigenvector).

We then generalize the model to the case where the interaction network is dependent on the distribution of opinions.  Specifically, we introduce the notion of an \textit{interaction function} that encodes whether an interaction between two agents is positive or negative depending on the distance between their opinions.  We again find that in all cases the model remains bounded.  Interestingly, depending on the shape of the interaction function, we find that the model displays a range of possible behaviors including convergence to a polarized state, convergence to neutral consensus, convergence to extreme consensus, and a failure to converge that could be interpreted as ongoing deliberation.  We prove rigorously that this nonlinear model converges to a consensus in the case that the interaction function is strictly positive.  To study how different interaction functions result in different regimes we introduce a prototypical interaction function in the form of a parametrized Morse potential and examine the long time behavior of the model in the parameter space of the interaction function.  In particular, in the case that the interaction function models homophilic interactions (interactions are positive at short opinion ranges and negative at long ranges), we find that the distribution of opinions converges to a polarized state and that the interaction network converges to a structure that is similar to what is observed in empirical data; homophilic clusters where interaction among clusters is positive and interaction between clusters is negative.

\section{Opinion formations with attraction and repulsion}
Our main aim is to introduce a model of opinion formation that facilitates a better understanding of the polarization of opinions as well as consensus. As described, most analysis of opinion formation models have been conducted on models where only attractive forces are present among the agents.  We aim at introducing repulsion in models of opinion formation; specifically we aim to include the possibility of edges in the interaction network with negative weights.  Denoting by $s_i(t)$ the opinion of an individual $i$ at time $t$ ($s_i$ could be a scalar or a vector), we consider an adjacency matrix $A$ with weights $a_{ij}$ that could be positive and negative, modeling the attraction of an opinion if $a_{ij}>0$ or repulsion if $a_{ij}<0$. Rather than adding an attraction/repulsion that is proportional to $s_j-s_i$, we propose dynamics where the opinion $s_i$ is actually attracted to the opposite opinion of $s_j$, i.e. $-s_j$, we dub the class of models with this interaction mechanism, the \textit{Doppelg\"anger model}.

\begin{figure}[H]
    \centering
    \includegraphics[width=0.6\linewidth]{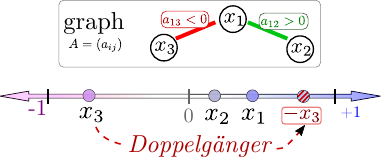}
  \caption{A schematic of the Doppelg\"anger interaction.  The agent $x_{1}$ is connected by a positive weight to $x_{2}$ and a negative weight to $x_{3}$ and therefore is attracted to the opinion of $x_{2}$ and the \textit{opposite} opinion of $x_{3}$.}
  \label{fig:Doppelganger_interaction}
\end{figure}

\subsection{Fixed-graph Doppelg\"anger Model}
\begin{defn}[fixed-graph Doppelg\"anger]
  Given a collection of $N$ agents, let $s_{i}\in \R^d$ represent the opinion of the i$^{th}$ agent and adjacenty matrix $A$. The \textbf{fixed-graph Doppelg\"anger model} is defined by the dynamics
\begin{equation}
  \label{eq:fixed_graph_Doppelganger}
  s_i'= \sum_{j=1}^N a_{ij} s_j   - ||\textbf{s}|| s_i \quad\text{with} \quad ||\textbf{s}||  := \sqrt{\frac{1}{N}\sum_{i=1}^N |s_i|^2}
\end{equation}
where $|s_i|$ is the normalized Euclidean norm.
\end{defn}

When $a_{ij}>0$, the opinion of the $i$th agent, $s_i$, is attracted to the opinion of the $j$th agent, $s_j$, whereas when $a_{ij}<0$ then $s_i$ is attracted to $-s_j$. We dub the dynamics the Doppelg\"anger model as each opinion $s_i$ has a mirror opinion $-s_i$ that has an opposite effect.  The fixed-graph Doppelg\"anger model does not necessarily converge and moreover consensus is not always a stable solution.

\begin{figure}[H]
  \centering
  \includegraphics[width=\linewidth]{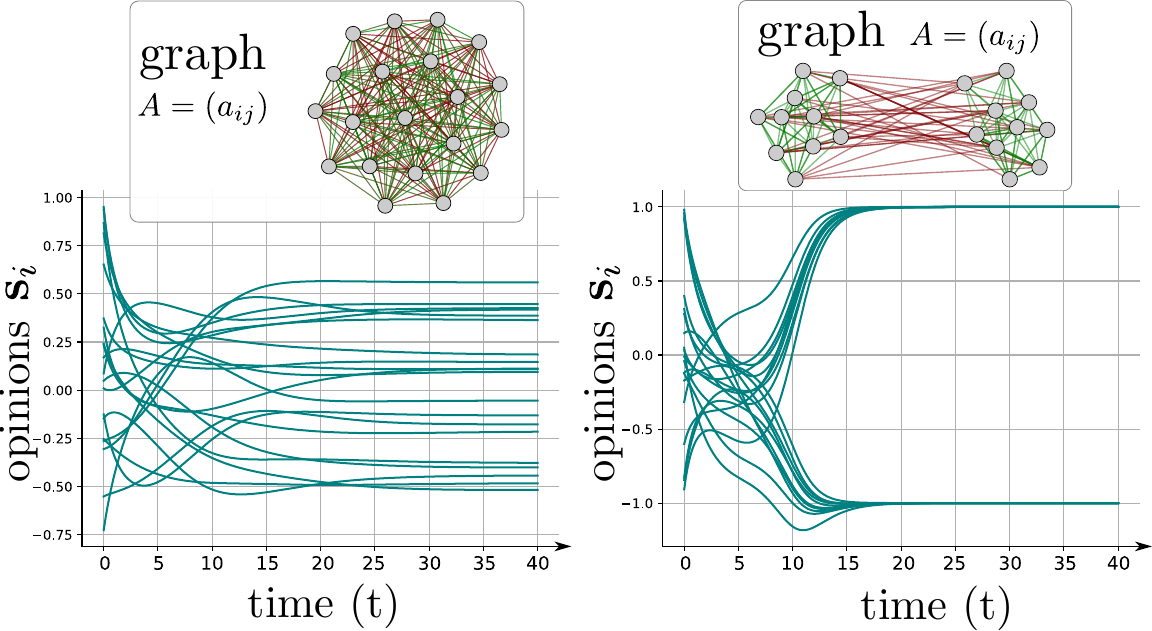}
  \caption{Evolution of the opinions $s_i(t)$ for the Doppelg\"anger dynamics \eqref{eq:fixed_graph_Doppelganger} using two different adjacency matrices $A=(a_{ij})$. In the left plot the interaction network models a ``normative" social interaction network and the opinion distribution converges to a configuration that is spread out within a moderate opinion range.  In the right plot the interaction network models a homophilic interaction network similar to what is seen in empirical social media data.  In this case, the opinions converge to two polarized clusters (corresponding to the homophilic network components) at the extreme ends of the opinion spectrum.  In both cases, the dynamics converge to the dominant eigenvector of the adjacency matrix, $A$. }
  \label{fig:Doppelganger}
\end{figure}

\subsection{Boundedness of the Solution}
Before investigating the long-time behavior dynamics of the Doppelg\"anger model, we first show that the solution remains bounded.
\begin{prop}
  \label{prop:dopp_bounded}
  The solutions to the fixed-graph Doppelg\"anger \eqref{eq:fixed_graph_Doppelganger}  remain uniformly bounded in time.
\end{prop}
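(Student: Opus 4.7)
The plan is to identify a Lyapunov-style energy that is controlled by the dynamics and deduce an upper bound on each $|s_i(t)|$ from a bound on the total quantity $\|\mathbf{s}(t)\|^2 = \tfrac{1}{N}\sum_i |s_i|^2$. The guiding intuition is that the nonlinear term $-\|\mathbf{s}\|s_i$ acts as a cubic damping that must dominate the linear interaction term once $\|\mathbf{s}\|$ is large enough, regardless of the signs of the entries $a_{ij}$.

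Concretely, I would set $E(t):=\|\mathbf{s}(t)\|^2$ and differentiate in time, obtaining
\begin{equation*}
  \dot{E}(t) \;=\; \frac{2}{N}\sum_{i=1}^N s_i\cdot s_i' \;=\; \frac{2}{N}\sum_{i,j=1}^N a_{ij}\,s_i\cdot s_j \;-\; 2\|\mathbf{s}\|^3,
\end{equation*}
where the second equality uses $\sum_i |s_i|^2 = N\|\mathbf{s}\|^2$. The linear term is a quadratic form in the $s_i$'s; writing each coordinate vector $s^{(k)}=(s_1^{(k)},\dots,s_N^{(k)})\in\R^N$ and splitting $\sum_{i,j}a_{ij}\,s_i\cdot s_j = \sum_{k=1}^d (s^{(k)})^\top A\,s^{(k)}$, I can bound this sum by $\|A\|_{\mathrm{op}}\sum_k \|s^{(k)}\|^2 = \|A\|_{\mathrm{op}}\sum_i|s_i|^2 = N\|A\|_{\mathrm{op}}\,E$, where $\|A\|_{\mathrm{op}}$ denotes the Euclidean operator norm of $A$ acting on $\R^N$.

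Inserting this bound gives the scalar differential inequality
\begin{equation*}
  \dot{E}(t) \;\le\; 2\|A\|_{\mathrm{op}}\,E(t) \;-\; 2\,E(t)^{3/2}.
\end{equation*}
Writing $u(t)=\sqrt{E(t)}=\|\mathbf{s}(t)\|$ (valid wherever $E>0$; the set $\{E=0\}$ reduces to the trivial equilibrium $\mathbf{s}\equiv 0$), this becomes $\dot u \le u(\|A\|_{\mathrm{op}} - u)$, so $\dot u<0$ as soon as $u>\|A\|_{\mathrm{op}}$. A standard comparison argument then yields $\|\mathbf{s}(t)\| \le \max\bigl(\|\mathbf{s}(0)\|,\,\|A\|_{\mathrm{op}}\bigr)$ for all $t\ge 0$, and the pointwise bound $|s_i(t)|\le \sqrt{N}\,\|\mathbf{s}(t)\|$ upgrades this to the uniform boundedness of every agent's opinion.

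The main step to verify carefully is the quadratic-form estimate $\sum_{i,j}a_{ij}\,s_i\cdot s_j \le \|A\|_{\mathrm{op}}\sum_i|s_i|^2$ in the vector-valued setting $s_i\in\R^d$, since $A$ itself may have positive and negative entries and is not assumed symmetric (one should use the bound via the symmetric part $\tfrac12(A+A^\top)$, or simply $|\langle x,Ax\rangle|\le \|A\|_{\mathrm{op}}\|x\|^2$, applied coordinate by coordinate). Beyond this, the argument is an elementary Grönwall/comparison computation; the cubic damping term built into the Doppelgänger dynamics is precisely strong enough to overcome any linear growth the interaction matrix can produce, which is the structural reason the model cannot blow up.
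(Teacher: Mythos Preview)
Your argument is correct and follows essentially the same route as the paper's proof: differentiate $\|\mathbf{s}\|^2$, bound the bilinear interaction term by a constant times $\|\mathbf{s}\|^2$, and let the cubic damping $-\|\mathbf{s}\|^3$ take over to obtain $\|\mathbf{s}(t)\|\le\max(\|\mathbf{s}(0)\|,\text{const})$. The only cosmetic difference is that the paper bounds the quadratic form via the entry-wise estimate $|a_{ij}|\le C$ and Cauchy--Schwarz, whereas you use the operator norm $\|A\|_{\mathrm{op}}$ (which in fact gives a sharper constant); both lead to the identical scalar differential inequality and conclusion.
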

\begin{proof} Consider a constant $C$ such that $|a_{ij}|\leq C$. For $\textbf{u}=(u_i)_{i=1:N}$ and $\textbf{v}=(v_i)_{i=1:N}$, let's define the scalar product:
  \begin{displaymath}
    \langle \textbf{u},\textbf{v}\rangle  = \frac{1}{N}\sum_{i=1}^N \langle u_i,v_i\rangle _{\mathbb{R}^d}
  \end{displaymath}
  where $\langle .,. \rangle_{\mathbb{R}^d}$ is the standard scalar product on $\mathbb{R}^d$. In particular, $\sqrt{\langle s,s\rangle }=\|s\|^2$. We deduce:
  \begin{eqnarray}
    \label{eq:dop_bounded}
    \frac{1}{2}\frac{d}{dt} \|\textbf{s}\|^2 &=& \frac{1}{2N}\frac{d}{dt} \sum_{i=1}^N |s_i|^2  \\&=& \frac{1}{N}\sum_{i=1}^N \langle s_i, \sum_{j=1}^Na_{ij}s_j - ||\textbf{s}||s_i\rangle _{\mathbb{R}^d} \\
                                         &=& \frac{1}{N}\sum_{i,j=1}^N a_{ij}\langle s_i, s_j\rangle _{\mathbb{R}^d}  - \frac{||\textbf{s}||}{N}\sum_{i=1}^N |s_i|^2 \\
                                         &\leq& C\|\textbf{s}\|^2 \;-\; \|\textbf{s}\|^3,
  \end{eqnarray}
  using the Cauchy-Schwarz inequality. Thus, the norm $\|\textbf{s}\|$ is decaying when $\|\textbf{s}\|>C$. Therefore, $\|\textbf{s}(t)\|\leq \max(\|\textbf{s}(t=0)\|,C)$ for all $t>0$.
\end{proof}

The result easily generalizes to the non-linear Doppelg\"anger dynamics \eqref{eq:Doppelganger} assuming that the interaction function $\phi$ is bounded. Indeed, the (non-constant) matrix $A(t)= (a_{ij}(t))_{ij}$ where $a_{ij}$ given by \eqref{eq:Doppelganger} is uniformly bounded in time when $\phi$ is bounded.

\subsection{Convergence to equilibrium}
We now turn our attention to the convergence of the fixed-graph Doppelg\"anger model \eqref{eq:fixed_graph_Doppelganger}.  The dynamics converge to an equilibrium under the assumption that $A$ has a dominant eigenvalue. 

\begin{defn}
  A square matrix $A$ has a dominant eigenvalue if there exists an eigenvalue $r$ of $A$ satisfying:
  \begin{equation}
    \label{eq:perron_eigen}
    |\lambda|<r \quad \text{ for any other eigenvalue } \lambda { of } A.
  \end{equation}
  The eigenvalue $r$ is referred to as the Perron-Frobenius eigenvalue. An eigenvector associated with $r$ is called a Perron-Frobenius eigenvector and denoted $\textbf{u}_r$.
\end{defn}

\begin{theorem}
  \label{thm:Doppelganger_fixed_graph}
  Consider the fixed-graph Doppelg\"anger model \eqref{eq:fixed_graph_Doppelganger} in dimension $d=1$. Assume that $A$ has a dominant eigenvalue $r$ and the initial condition $s(t=0)$ has a non-zero component along the associated Perron-Frobenius eigenvector. Then the solution $s(t)$ converges (exponentially fast) to a Perron-Frobenius eigenvector $\textbf{u}_r$ of $A$.
\end{theorem}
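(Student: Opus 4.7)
The plan is to convert the nonlinear equation $\mathbf{s}'=A\mathbf{s}-\|\mathbf{s}\|\mathbf{s}$ into the pure linear flow $\mathbf{w}'=A\mathbf{w}$ via a scalar time-dependent rescaling, so that the theorem becomes a statement about the asymptotics of the continuous power iteration $\mathbf{w}(t):=e^{tA}\mathbf{s}(0)$. First I would set $\mathbf{s}(t)=f(t)\mathbf{w}(t)$ with $f(0)=1$. Differentiating and substituting into \eqref{eq:fixed_graph_Doppelganger}, the two $A\mathbf{w}$ terms cancel and only the scalar ODE $f'=-|f|\|\mathbf{w}\|f$ survives. Since $f\equiv 0$ is a solution, uniqueness together with $f(0)=1>0$ forces $f>0$ for all $t$, so $(1/f)'=\|\mathbf{w}\|$ and one obtains the explicit formula
\[
  \mathbf{s}(t)=\frac{\mathbf{w}(t)}{1+\int_0^t\|\mathbf{w}(\tau)\|\,\dd\tau}.
\]
The nonlinearity is thus reduced to a scalar normalization; note also that $\mathbf{w}(t)\neq 0$ for every $t$ by invertibility of $e^{tA}$, so the ansatz is globally well-posed.

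Next I would invoke the spectral hypothesis. Decompose $\mathbf{s}(0)=c\,\mathbf{u}_r+\mathbf{v}_0$, where $\mathbf{v}_0$ lies in the $A$-invariant complement of $\mathrm{span}(\mathbf{u}_r)$ and $c\neq 0$ by assumption. Because $r$ is dominant, for any $\delta$ strictly smaller than the spectral gap $r-\max_{\lambda\neq r}|\lambda|$ one has $\|e^{tA}\mathbf{v}_0\|=O(e^{(r-\delta)t})$ (polynomial prefactors from possible Jordan blocks of the subdominant eigenvalues are absorbed by shrinking $\delta$). Hence $\mathbf{w}(t)=c\,e^{rt}\mathbf{u}_r\bigl(1+O(e^{-\delta t})\bigr)$ and, after integrating, $\int_0^t\|\mathbf{w}(\tau)\|\,\dd\tau=(|c|\|\mathbf{u}_r\|/r)\,e^{rt}\bigl(1+O(e^{-\delta t})\bigr)$. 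Substituting into the closed form and cancelling the $e^{rt}$ factors yields
\[
  \mathbf{s}(t)\;\longrightarrow\; r\,\mathrm{sign}(c)\,\frac{\mathbf{u}_r}{\|\mathbf{u}_r\|}
\]
with exponential rate $\delta$. The limit is a Perron--Frobenius eigenvector of $A$, and its norm equals $r$, consistent with the equilibrium relation $A\mathbf{s}^*=\|\mathbf{s}^*\|\mathbf{s}^*$.

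The main obstacle is the denominator estimate: the cancellation above requires $\int_0^t\|\mathbf{w}(\tau)\|\,\dd\tau$ to grow like $e^{rt}$, which tacitly uses $r>0$ (if $r\leq 0$ the Perron mode is no longer the asymptotically dominant one in $\mathbf{w}$, and this argument breaks). The remaining bookkeeping is routine --- the persistence of the $\mathbf{u}_r$-component follows from the $A$-invariance of the splitting, and the rate $\delta$ can be chosen uniformly in the initial data with $c\neq 0$. Proposition \ref{prop:dopp_bounded} guarantees that both sides of the closed form remain controlled along the way, so no blow-up issue can spoil the asymptotic analysis.
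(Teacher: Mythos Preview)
Your proposal is correct and takes a genuinely different route from the paper. The paper decomposes $\mathbf{s}(t)=\alpha(t)\mathbf{u}_r+\mathbf{v}(t)$ spectrally from the outset, tracks the two pieces separately via Duhamel, uses the boundedness result (Proposition~\ref{prop:dopp_bounded}) to extract the lower bound $\int_0^t\|\mathbf{s}\|\geq rt-C$, shows $\mathbf{v}(t)\to 0$ exponentially from the spectral gap, and then reduces the equation for $\alpha$ to a perturbed Bernoulli ODE $\alpha'=(r-\delta(t))\alpha-\alpha^2$ whose convergence $\alpha(t)\to r$ is handled by a separate lemma in the appendix. Your substitution $\mathbf{s}=f\mathbf{w}$ with $\mathbf{w}'=A\mathbf{w}$ short-circuits all of this: it produces the closed form $\mathbf{s}(t)=e^{tA}\mathbf{s}(0)\big/\bigl(1+\int_0^t\|e^{\tau A}\mathbf{s}(0)\|\,\dd\tau\bigr)$, after which the conclusion is just elementary asymptotics of $e^{tA}$ under a spectral gap --- no Bernoulli lemma required. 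This buys a cleaner argument, an explicit solution formula, a transparent identification of the dynamics as continuous-time power iteration, and an immediate extension to $d>1$ without the coordinate-wise bookkeeping of the appendix. The paper's route, on the other hand, develops the perturbed Bernoulli lemma that it reuses in the proof of Theorem~\ref{thm:cv_phi_positive}, so there is some economy across the manuscript. Finally, your worry about $r\leq 0$ is unnecessary: the paper's definition of dominant eigenvalue requires $|\lambda|<r$ for every other eigenvalue, which forces $r>0$ as soon as $E_r$ is a proper subspace (implicit in writing $E_r=\mathrm{Span}(\mathbf{u}_r)$).
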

\begin{proof}
  First, we decompose the dynamics into two parts using the eigenspaces of $A$. We denote $E_r$ the eigenspace related to the dominant eigenvalue (i.e. $E_r = \text{Span}(\textbf{u}_r)$) and  $E_s$ the union of all the other eigenspaces (i.e. $E_s= E_{\lambda_1}\bigoplus E_{\lambda_2}...$). Thus,
  \begin{equation}
    \label{eq:decomposition_Rn}
    \R^n = E_{r}\oplus E_{s}.
  \end{equation}
The solution to the Doppelg\"anger model has a unique decomposition:
\begin{equation}
  \label{eq:decompositin_s}
  \textbf{s}(t) = \alpha(t) \textbf{u}_r + \textbf{v}(t) \, \qquad \text{with}\quad \alpha(t) \textbf{u}_r \in E_r,\, \textbf{v}(t) \in E_s.
\end{equation}
Plugging this decomposition into the dynamics \eqref{eq:fixed_graph_Doppelganger}, we find:
\begin{equation*}
  \alpha' \textbf{u}_r + \textbf{v}' = r \,\alpha \textbf{u}_r + A\textbf{v} \;-\; ||\textbf{s}||\big(\alpha \textbf{u}_r + \textbf{v}\big) \;\; \\
\end{equation*}
\begin{equation}
  \label{eq:dec}
  \Rightarrow \;\; \left\{
    \begin{array}{rcccc}
      \alpha' &=& r \alpha & - &  ||\textbf{s}||\alpha \\
      \textbf{v}' &=& A_{|E_s} \textbf{v} &-& ||\textbf{s}|| \textbf{v}
    \end{array}
  \right.
\end{equation}
The two equations are linked by the non-linear term $||\textbf{s}||$. But it is a 'gentle' non-linearity, if one could control this damping term, one could analyze the long term behavior of the solution in each eigenspace respectively.

In a second step, we try to find some lower bound on $||\textbf{s}||$. We already know that $||\textbf{s}|| \leq C$ thank to proposition \ref{prop:dopp_bounded}. Using \eqref{eq:dec}, we can further write the solution as (Duhamel's formulation):
\begin{equation}
  \label{eq:duhamel}
  \alpha(t) = \exp\left(r t - \int_0^t ||\textbf{s}(z)||\,d z\right)\alpha(0).
\end{equation}
Without loss of generality, we can assume that $\alpha(t)>0$. Since ${\bf S}(t)$ remains bounded (proposition \ref{prop:dopp_bounded}), we also have that $\alpha(t)\leq C$ and therefore:
\begin{equation}
  \label{eq:inequality_int_mu}
  r t - \int_0^t ||\textbf{s}(z)||\,d z \leq C \quad \Rightarrow \quad r t - C \leq \int_0^t ||\textbf{s}(z)||\,d z.
\end{equation}

Knowing this bound, we now turn our attention to the other components $\textbf{v}(t)$ in \eqref{eq:dec}. Using the spectral gap satisfies by $A$, there exists $\varepsilon>0$ and a norm $\|.\|_{\sim}$ such that:
\begin{equation}
  \label{eq:spectral_gap_A_E}
  \|A_{|E_s}\|_{\sim} \leq r-\varepsilon.
\end{equation}
Therefore, we deduce:
\begin{align}
  \|\textbf{v}'\| &\leq (r-\varepsilon)\|\textbf{v}\| - ||\textbf{s}(t)||\cdot \|\textbf{v}\| \\  \|\textbf{v}(t)\| &\leq \exp\left((r-\varepsilon)t - \int_0^t ||\textbf{s}(z)||\,d z\right) \nonumber \\
                                                                 \|\textbf{v}(t)\| &\leq \exp\left(C-\varepsilon t\right) \stackrel{t \rightarrow +\infty}{\longrightarrow} 0, \label{eq:cv_v}
\end{align}
where we use the inequality \eqref{eq:inequality_int_mu}.

To finish the proof, we need to prove the convergence in time of the component $\alpha(t)$. Thanks to the previous result \eqref{eq:cv_v}, we can write:
\begin{equation}
  \label{eq:mu_dec}
  \mu(\textbf{s}(t)) = \|\alpha(t) \textbf{u}_r\| + \delta(t)  \quad \text{with} \quad \delta(t) \stackrel{t \rightarrow +\infty}{\longrightarrow} 0.
\end{equation}
Indeed,
\begin{align*}
  |\delta(t)| &= \left|\|\alpha(t)\textbf{u}_r + \textbf{v}(t)\| - \|\alpha(t)\textbf{u}_r\|\right| \\
  |\delta(t)| &\leq \|\textbf{v}(t)\| \stackrel{t \rightarrow +\infty}{\longrightarrow} 0.
\end{align*}
Furthermore, since the decay of $\|\textbf{v}(t)\|$ toward zero is exponential \eqref{eq:cv_v}, the perturbation $\delta(t)$ also satisfies:
\begin{equation}
  \label{eq:decay_delta}
  \int_0^{+\infty} |\delta(s)| \,d s \leq C.
\end{equation}
Without loss of generality, we have assume that $\alpha(0)\neq 0$ and therefore $\alpha(t)>0$. We can also assume that the eigenvector $\textbf{u}_r$ is normalized, i.e. $\|\textbf{u}_r\|=1$. Thus, we obtain:
\begin{equation}
  \label{eq:behavior_alpha}
 ||\textbf{s}(t)|| = \alpha(t) + \delta(t) \qquad \text{with } \delta(t) \stackrel{t \rightarrow +\infty}{\longrightarrow} 0.
\end{equation}
We can now go back to the equation satisfied by $\alpha(t)$ \eqref{eq:dec} to deduce:
\begin{equation}
  \label{eq:alpha_ode_perturb}
  \alpha' = (r-\delta(t)) \alpha - \alpha^2 \quad \text{with } \delta(t) \stackrel{t \rightarrow +\infty}{\longrightarrow} 0.
\end{equation}
Thus, $\alpha(t)$ satisties a Bernouilli differential equation for which there exists an explicit solution.  We conclude using the result from lemma \eqref{lem:Bernouilli_ODE} included in the appendix.
\end{proof}

\section{Doppelg\"anger dynamics with evolving graphs}
The Doppelg\"anger model is naturally extended to a ``fully" non-linear version by making the weights $a_{i,j}$ of the adjacency matrix $A$ dependent on the distribution of opinions. A natural choice to allow for the modeling of ``local" biases in the agents is to impose that the weight in the interaction network between agents $i$ and $j$, $a_{ij}$, is a function of the difference of opinion between agents $i$ and $j$.  For example, if the interaction function is positive in a neighborhood of $0$ and negative outside of that neighborhood this could model the the local biases of selective exposure/confirmation bias and backlash effect.
\subsection{Fully non-linear Doppelg\"anger Model}
\begin{defn}[The Doppelg\"anger model]
Given a collection of $N$ agents, let $s_{i}\in \R^d$ represent the opinion of the i$^{th}$ agent. Let $\phi: \R \rightarrow \R$ be the interaction function. The \textbf{Doppelg\"anger model} is defined by the dynamics
\begin{equation}
\label{eq:Doppelganger}
s_i'= \sum_{j=1}^N a_{ij} s_j   - ||\textbf{s}|| s_i.
\end{equation}
where
\begin{equation}
   a_{ij} = \frac{\phi(|s_{j} - s_{i}|)}{\sum_{k=1}^N|\phi(|s_{k} - s_{i}|)|} \quad \text{and} \quad ||\textbf{s}|| = \sqrt{\frac{1}{N}\sum_{i=1}^N |s_i|^2}.
\end{equation}
\end{defn}
Notice that the interaction function $\phi$ is not always positive. As long as the function $\phi$ is bounded, the opinions $s_i(t)$ remain bounded using the inequality \eqref{eq:dop_bounded}. The convergence of the dynamics toward an equilibrium is more delicate, the dynamics could very well oscillate. The investigation of the dynamics will mainly focus on the interplay between the support of the function $\phi$, the spectrum of the corresponding adjacency matrix $A$ and the convex hull of the configuration of opinions.  For example, in the case that the interaction function is positive in a neighborhood of $0$ and negative outside of that neighborhood, the model displays a "consensus trap" - if the convex hull of the initial configuration is sufficiently small (such that the difference between all agent opinions lies in the set of the domain of $\phi$ where $\phi$ is positive) then consensus is the only outcome possible.

\begin{cor}
  \label{cor:dop_bounded}
  The solution to the non-linear Doppelg\"anger \eqref{eq:Doppelganger} with $\phi$ bounded remains uniformly bounded in time.
\end{cor}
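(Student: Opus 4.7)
The plan is to rerun the energy estimate of Proposition~\ref{prop:dopp_bounded} essentially verbatim; the only new ingredient needed is a uniform in time bound on the (now time-varying) coefficients $a_{ij}(t)$. The crucial observation is that wherever the denominator $\sum_{k}|\phi(|s_k-s_i|)|$ is strictly positive, the normalization in the definition of $a_{ij}$ forces $\sum_{j=1}^{N}|a_{ij}(t)| = 1$, and hence in particular $|a_{ij}(t)| \leq 1$ pairwise and uniformly in $t$. This bound is in fact independent of any pointwise bound on $\phi$ itself; boundedness of $\phi$ is really being used only to ensure classical well-posedness of the ODE and to rule out degeneracies in the denominator.

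First I would dispose of the degenerate case in which $\sum_{k}|\phi(|s_k-s_i|)|=0$ for some agent $i$ (which happens only when $\phi$ vanishes at every inter-opinion distance seen from $i$). The standard remedy is to adopt the convention $a_{ij}(t)=0$ there, which preserves the bound $|a_{ij}|\leq 1$ trivially and matches the natural interpretation that a totally isolated agent receives no interaction.

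Next I would replay the identities in the proof of Proposition~\ref{prop:dopp_bounded}. Starting from
\begin{equation*}
\tfrac{1}{2}\tfrac{d}{dt}\|\mathbf{s}\|^2 \;=\; \frac{1}{N}\sum_{i,j=1}^{N} a_{ij}(t)\,\langle s_i,s_j\rangle_{\mathbb{R}^d} \;-\; \|\mathbf{s}\|\cdot\frac{1}{N}\sum_{i=1}^{N}|s_i|^2,
\end{equation*}
applying Cauchy--Schwarz together with the uniform bound $|a_{ij}(t)|\leq 1$ in the same fashion as before produces a differential inequality of the form
\begin{equation*}
\tfrac{1}{2}\tfrac{d}{dt}\|\mathbf{s}\|^2 \;\leq\; C'\,\|\mathbf{s}\|^2 \;-\; \|\mathbf{s}\|^3,
\end{equation*}
for a constant $C'$ depending only on $N$. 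Thus $\|\mathbf{s}(t)\|$ is non-increasing as soon as it exceeds $C'$, and one concludes $\|\mathbf{s}(t)\| \leq \max(\|\mathbf{s}(0)\|,C')$ for all $t>0$.

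The main obstacle is essentially cosmetic rather than substantive: the nonlinear coupling through the opinions enters only via the coefficients $a_{ij}(t)$, whose size is tamed automatically by the normalization. In particular, no information about the evolving network structure (connectivity, spectrum, etc.) is required, since the argument exploits only the quadratic growth of the interaction term against the cubic damping. This is in sharp contrast with the convergence question addressed in Theorem~\ref{thm:Doppelganger_fixed_graph}, where a spectral gap is essential; here, only the algebraic sign of the dominant term in the energy estimate matters.
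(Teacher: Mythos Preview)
Your proof is correct and follows essentially the same route as the paper: obtain a uniform-in-time bound on $|a_{ij}(t)|$ and then rerun the energy estimate of Proposition~\ref{prop:dopp_bounded}. In fact your observation that the normalization forces $\sum_j |a_{ij}(t)|=1$ (hence $|a_{ij}(t)|\le 1$) is slightly sharper than the paper's cruder bound $|a_{ij}|\le\max(1,M)$, and your remark that boundedness of $\phi$ is then only needed for well-posedness rather than for the estimate itself is a nice refinement.
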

\begin{proof}
  It is sufficient to notice that the coefficient $a_{ij}$ are uniformly bounded with:
  \begin{displaymath}
    |a_{ij}|\leq \max(1,M) \quad  \text{where } \; M=\sup_{r \geq 0} |\phi(r)|.
  \end{displaymath}
  Thus, we can apply the same computation as in proposition \ref{prop:dopp_bounded} and conclude.
\end{proof}

\begin{theorem}
  \label{thm:cv_phi_positive}
  Suppose that the attraction/repulsion function $\phi$ is strictly positive. Then the (fully non-linear) Doppelg\"anger model \eqref{eq:Doppelganger} converge to a consensus, i.e. there exists $s_*$ such that:
  \begin{displaymath}
    s_i(t) \stackrel{t \rightarrow +\infty}{\rightarrow} s_*.
  \end{displaymath}
\end{theorem}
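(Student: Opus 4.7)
The key observation is that when $\phi>0$, the absolute values in the denominator of $a_{ij}$ are redundant and the coefficients $a_{ij}=\phi(|s_j-s_i|)/\sum_k\phi(|s_k-s_i|)$ form a row-stochastic matrix with $\sum_{j=1}^N a_{ij}=1$. This lets us rewrite the dynamics \eqref{eq:Doppelganger} as
\[
s_i' \;=\; \sum_{j=1}^N a_{ij}(s_j-s_i) \;+\; (1-||\textbf{s}||)\,s_i,
\]
cleanly separating a classical (time-dependent) consensus drift from a purely radial scaling term. The strategy is to remove the scalar nonlinearity by rescaling and then invoke standard exponential consensus theory on the remaining linear part.

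Concretely, let $\lambda(t)$ solve $\lambda'=(1-||\textbf{s}||)\lambda$ with $\lambda(0)=1$, and set $\tilde s_i:=s_i/\lambda$. Since $a_{ij}$ depends only on $|s_j-s_i|=\lambda|\tilde s_j-\tilde s_i|$, exactly the same coefficients appear, and a direct computation shows that $\tilde{\textbf{s}}$ satisfies the purely linear consensus system
\[
\tilde s_i' \;=\; \sum_{j=1}^N a_{ij}(\tilde s_j-\tilde s_i).
\]
Corollary \ref{cor:dop_bounded} bounds $||\textbf{s}||$ and hence every $|s_i-s_j|$ uniformly in time, so by continuity and strict positivity of $\phi$ on the resulting compact interval $[0,R]$, we obtain a uniform lower bound $a_{ij}\ge\eta>0$. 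Working coordinate by coordinate, a standard differential inequality for the diameter $\tilde D(t)=\max_{i,j}|\tilde s_i-\tilde s_j|$ of the form $\tilde D'\le -2\eta\,\tilde D$ then yields $\tilde s_i(t)\to\tilde s_\ast$ exponentially fast for some $\tilde s_\ast\in\R^d$.

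The last step is to translate consensus of $\tilde{\textbf{s}}$ back into convergence of $\textbf{s}$. In the generic case $\tilde s_\ast\neq 0$, one has $||\tilde{\textbf{s}}||\to|\tilde s_\ast|>0$, so $\lambda=||\textbf{s}||/||\tilde{\textbf{s}}||$ stays bounded; substituting $\tilde s_i\to\tilde s_\ast$ into $\lambda'=(1-||\textbf{s}||)\lambda$ produces the asymptotic logistic equation $\lambda'\sim(1-|\tilde s_\ast|\lambda)\lambda$ whose stable fixed point is $1/|\tilde s_\ast|$, and hence $s_i\to\tilde s_\ast/|\tilde s_\ast|$, a unit-norm consensus. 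The main obstacle is the degenerate case $\tilde s_\ast=0$: here $||\tilde{\textbf{s}}||\to 0$ exponentially while $\lambda$ could a priori grow quickly enough to keep the product $||\textbf{s}||=\lambda||\tilde{\textbf{s}}||$ bounded away from zero. Ruling this out---likely by a finer coupled analysis of the ODEs for $\lambda$ and $||\tilde{\textbf{s}}||$, using that the consensus decay rate $2\eta$ in the $\tilde{\textbf{s}}$ variables must dominate the exponential envelope $\lambda'\le\lambda$ once $||\textbf{s}||$ becomes small---is what is needed to conclude $s_i\to 0$ in that degenerate case.
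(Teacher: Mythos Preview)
Your reduction is exactly the one the paper uses: your rescaling $\tilde s_i=s_i/\lambda$ with $\lambda'=(1-\|\textbf{s}\|)\lambda$ is the paper's integrating-factor substitution $z_i=e^{\nu}s_i$ with $\nu(t)=\int_0^t(-1+\|\textbf{s}\|)$, so $\lambda=e^{-\nu}$ and $\tilde s_i=z_i$. The paper then obtains the same pure consensus system $z_i'=\sum_j a_{ij}(z_j-z_i)$, the same uniform lower bound $a_{ij}\ge \underline m/(N\overline m)$ from Corollary~\ref{cor:dop_bounded}, and the same conclusion $z_i\to z_\ast$ from standard consensus results. Up to the point where $\tilde s_i\to\tilde s_\ast$, your proposal and the paper's proof are the same argument.

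The only divergence is in how to pass from $z_\ast$ back to $s_\ast$. Rather than splitting on whether $\tilde s_\ast=0$ and studying $\lambda$, the paper first argues that $\mathrm{Var}[\{s_i\}]=e^{-2\nu}\,\mathrm{Var}[\{z_i\}]\to 0$, and then derives a closed equation for the mean, $\overline s{\,}'=\epsilon_1+(1-\epsilon_2)\overline s-|\overline s|\,\overline s$ with $\epsilon_1,\epsilon_2\to 0$, to which the perturbed Bernoulli Lemma~\ref{lem:Bernouilli_ODE} is applied. Your non-degenerate branch ($\tilde s_\ast\neq 0$) is essentially this same Bernoulli/logistic argument applied to $\lambda$ instead of $\overline s$, so there you recover the paper's conclusion with the same tool, and the unit-norm limit $s_\ast=\tilde s_\ast/|\tilde s_\ast|$ is the right answer.

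Your flagged degenerate case $\tilde s_\ast=0$ is a genuine soft spot, and your instinct to isolate it is correct: it is exactly the situation in which the paper's variance step is delicate, since the boundedness of the prefactor $e^{-\nu}=\lambda$ is not independently justified when $z_\ast=0$. Your suggested remedy, however, cannot work as stated. The lower bound you obtain gives $\eta\le \underline m/(N\overline m)\le 1/N$, so the diameter decay rate $2\eta\le 2/N$ is strictly smaller than the crude growth envelope $\lambda'\le\lambda$ for every $N\ge 3$, and a naive rate comparison fails. If you want to close this branch along your lines you must exploit the coupled equation $\lambda'=(1-\lambda\,\|\tilde{\textbf{s}}\|)\lambda$ itself rather than the bound $\lambda'\le\lambda$; this is the substantive gap remaining in your outline.
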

\begin{proof}
  Since $\phi>0$, the coefficients $(a_{ij})_j$ form a convex combination for any $i$. Thus, we can rewrite the dynamics as:
  \begin{equation}
    \label{eq:s_trick}
    s_i' = \sum_{j=1}^N a_{ij} (s_j-s_i)   \;+\; (1- ||\textbf{s}||) s_i
  \end{equation}
  Using an integrating factor, we introduce the new coordinates:
  \begin{equation}
    \label{eq:def_z}
    z_i(t) = e^{\nu(t)}s_i(t) \qquad \text{with } \nu(t)= \int_{0}^t -1+||\textbf{s}(u)||\,du
  \end{equation}
  Taking the time derivative of $z_i(t)$, the new coordinates satisfy:
  \begin{equation}
    \label{eq:z_eq}
    z_i'= \sum_{j=1}^N a_{ij} (z_j-z_i).
  \end{equation}
  The coefficients $a_{ij}$ are time-dependent but since $\{s_i(t)\}_i$ remains uniformly bounded, i.e. there exists $L>0$ such that $s_i(t)\in[-L,L]$ thanks to (corollary \ref{cor:dop_bounded}) we can find a lower-bounded:
  \begin{displaymath}
    a_{ij}(t) = \frac{\phi(|s_i(t)-s_j(t)|)}{\sum_{k=1}^N \phi(|s_i(t)-s_k(t)|)} \geq \frac{\underline{m}}{N\cdot\overline{m}}>0,
  \end{displaymath}
  where $\underline{m}$ and $\overline{m}$ are the lower and upper bound (respectively) of $\phi$ over $[0,2L]$. Using the result from \cite{motsch_heterophilious_2014,hegselmann2002opinion} we conclude that $\{z_i(t)\}_i$ converges to a consensus, i.e. there exists $z_*$ such that:
  \begin{equation}
    \label{eq:cv_z}
    z_i(t) \stackrel{t \rightarrow +\infty}{\rightarrow} z_*.
  \end{equation}
  As a consequence the variance of $\{s_i(t)\}_i$ converges to zero:
  \begin{eqnarray*}
    \text{Var}[\{s_i(t)\}_i] &=& \frac{1}{N^2} \sum_{1\leq i,j\leq N}|s_j(t)-s_i(t)|^2 \\&=& \frac{1}{N^2} e^{v(t)}\sum_{1\leq i,j\leq N} |z_j(t)-z_i(t)|^2 \stackrel{t \rightarrow +\infty}{\rightarrow} 0,
  \end{eqnarray*}
  using that $e^{v(t)}$ remains uniform bounded and the convergence of $\{z_i(t)\}_i$.

  To conclude the proof, we only need to show that mean of $\{s_i(t)\}_i$ converges. Denote $\overline{s}(t)=\frac{1}{N}\sum_{i=1}^N s_i(t)$, we find:
  \begin{eqnarray*}
    \overline{s}'  &=& \frac{1}{N}\sum_{i=1}^N\left(\sum_{j=1}^N a_{ij} (s_j-s_i)   \;+\; (1- ||\textbf{s}||) s_i\right) \\&=& \sum_{1\leq i,j\leq N} \frac{a_{ij}}{N} (s_j-s_i)   \;+\; (1- ||\textbf{s}||) \overline{s}.
  \end{eqnarray*}
  Since the coefficients $a_{ij}$ are upper-bounded, we notice notice that:
  \begin{equation}
    \label{eq:small_term}
    \epsilon_1(t):=\sum_{1\leq i,j\leq N} \frac{a_{ij}(t)}{N} (s_j(t)-s_i(t))\stackrel{t \rightarrow +\infty}{\rightarrow} 0.
  \end{equation}
  Moreover:
  \begin{eqnarray*}
    ||\textbf{s}(t)|| &=& \sqrt{\frac{1}{N}\sum_{i=1}^N|s_i(t)|^2} \\&=& \sqrt{\frac{1}{N}\sum_{i=1}^N (|\overline{s}(t)|^2 + |s_i(t)-\overline{s}(t)|^2)}\\
     &=& \sqrt{|\overline{s}(t)|^2 + \text{Var}[\{s_i(t)\}_i]} = |\overline{s}(t)| + \epsilon_2(t),
  \end{eqnarray*}
  with $\epsilon_2(t) \stackrel{t \rightarrow +\infty}{\rightarrow} 0$.
  Thus, the ODE satisfies by $\overline{s}(t)$ can be written in the form:
  \begin{equation}
    \label{eq:ODE_mean_s}
    \overline{s}' = \epsilon_1 + (1-\epsilon_2)\cdot\overline{s} - |\overline{s}|\overline{s}.
  \end{equation}
  Thus, we obtain a small perturbation of the Bernouilli differential equation as in Lemma \ref{lem:Bernouilli_ODE}. Thus, we conclude that $\overline{s}(t)$ convergences, i.e. there exists $s_*$ such that: $\overline{s}(t) \stackrel{t \rightarrow +\infty}{\rightarrow} s_*$. Since the variance of $\{s_i(t)\}_i$ is converging to zero, we deduce that $s_i(t) \stackrel{t \rightarrow +\infty}{\rightarrow} s_*$ for all $i$.
\end{proof}

\subsection{Morse Potential}
Before trying to analyze the non-linear Doppelg\"anger dynamics, we numerically explore the different possible regimes of behavior depending on the interaction function $\phi$.  In order to smoothly vary the shape of the interaction function with the goal of identifying possible phase transitions, we introduce a parameterized Morse potential as a prototypical interaction function:
\begin{equation}
  \label{eq:morse}
  \phi(r) = a e^{-r} + b e^{-r^2}.
\end{equation}

Depending on the two parameters $a$ and $b$, the function $\phi$ has various shapes.  To explore different possible regimes of the model behavior we examine how the model behavior varies in the parameter space of the interaction function.  Notice that due to the normalization in the definition of $a_{ij}$, changing  $(a,b)$ to $(\alpha a,\alpha b)$ with $\alpha>0$ has no influence on the dynamics. For this reason, we only illustrate the More potential using $(a,b)$ on the unit circle. Where we see the following results:
\begin{itemize}
\item $\theta = \frac{\pi}{4}$ Figure \ref{fig:2D_Phase_Diagrams}(A): the interaction function satisfies $\phi>0$ (attraction). The dynamics convergence to an \textit{extreme consensus} $\pm 1$.
\item $\theta = \frac{3\pi}{4}$ Figure \ref{fig:2D_Phase_Diagrams}(B): the function $\phi$ is positive at close range and negative at larger range ({\it homophilious dynamics}). The dynamics converge to a polarized state.
\item $\theta = \frac{5\pi}{4}$ Figure \ref{fig:2D_Phase_Diagrams}(C): here $\phi<0$ (repulsion). The dynamics convergence to a {\it neutral} consensus $0$.
\item $\theta = \frac{7\pi}{4}$ Figure \ref{fig:2D_Phase_Diagrams}(D): the function $\phi$ is now negative at close range and positive at larger range ({\it heterophilious dynamics}). The dynamics is no-longer converging to an equilibrium and we observe oscillatory behavior.
\end{itemize}

\begin{figure}[H]
    \centering
    \includegraphics[width =\linewidth]{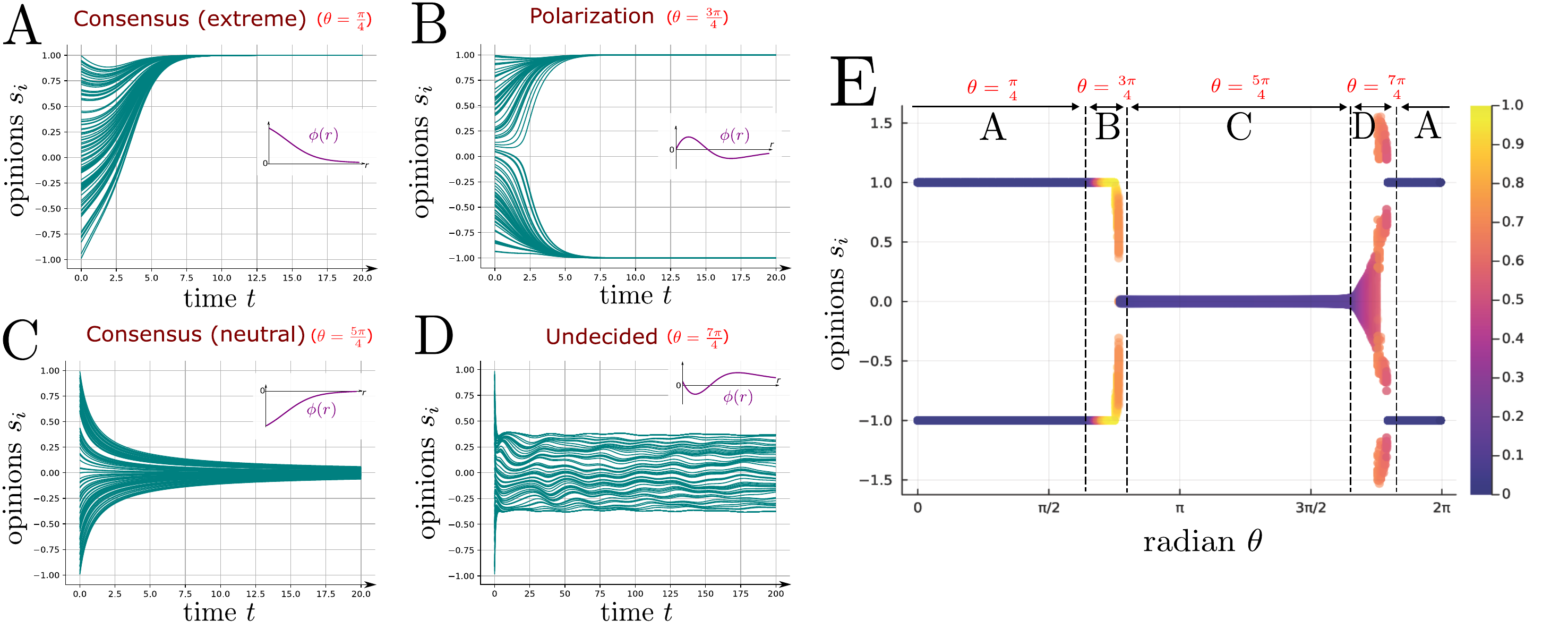}
    \caption{(A)-(D) Illustration of the fully non-linear Doppelg\"anger dynamics \eqref{eq:Doppelganger} using the Morse potential \eqref{eq:morse} $N=100$ agents.\\(E) Figure of the final positions of the Doppelg\"anger dynamics utilizing Morse Potential \eqref{eq:morse} $N=100$ agents and 30 simulations.}
    \label{fig:2D_Phase_Diagrams}
\end{figure}

Utilizing the equation \eqref{eq:morse} we set the parameters $a,b$ by varying $\theta\in [0,2\pi]$, where $a = \cos\theta$ and $b = \sin\theta$. By varying the parameters, we explore how the dynamics transition between the states described in Figure \ref{fig:2D_Phase_Diagrams} (A-D). We propose 30 simulations over 100 agents (N=100) where we plot the final opinions and analyze the spread of these opinions. Notably, the state of Consensus (neutral/extreme) are both more apparent with this model. We can classify these stages by looking at the spread which is relatively small ($\approx$ 0) where in the case of Consensus (extreme), through numerous simulations, can tend to either side depending on the proximity of the overall opinions to one extreme. Within Figure \ref{fig:2D_Phase_Diagrams}(E), to make a distinction from Consensus Extreme to Polarization although differs in variance. More interestingly, the stage of polarization is where we see a split in the final opinions where rather having a consensus we see that opinions attract to both extremes thus causing the spread to increase. From Polarization as it transitions the opinions tend toward a neutral state as the measure of spread rapidly decreases and achieves a consensus.  Lastly, the smaller increase in spread as degree increases correlates to the Undecided state. This creates a model where the opinions are not spread between two extremes rather cluster around a neutral opinion with a small but existing spread in opinion until it ultimately trends toward once more to consensus extreme.

\section{Exploration in higher dimension}
We now explore if the results obtained in the previous sections for one dimensional opinions ($s_{i} \in \R$) hold for $d$ dimensional opinions under the same dynamics.  Our boundedness results, Proposition \ref{prop:dopp_bounded} and Corollary \ref{cor:dop_bounded}, hold independent of dimension. However, our proofs concerning convergence in both the fixed-graph and fully non-linear cases assume one dimensional opinions.

\subsection{Convergence to equilibrium of the fixed-graph Doppelg\"anger dynamics}
As in our one dimensional exploration, we first consider the case where the interaction network among agents is fixed and does not depend on the opinion distribution among the agents.  We find that much of our analytical work in the one dimensional case transfers here as the vector of the $i$th coordinates of all of the agents, $\textbf{s}_{\dot, i} = (s_{1,i}, s_{2,i},...,s_{N,i})$ where $i \in \set{1,...,d}$ evolves according to the one dimensional dynamics shown in Figure \ref{fig:3D_phase_diagrams} (A-D).

\subsection{Dynamics of Phase Transition for higher dimensions}
Here we simulate 50 vectors of opinions $s^{1}_{i}, s^{2}_{i}$ and plot the final positions while measuring the square root of the standard deviation over both opinions separately. The phase diagram is consistent with its two dimensional counterpart in Figure \ref{fig:2D_Phase_Diagrams} (E). Differing from the two dimensional figure, we take the sum of spread of the variables thus our measure takes on zero to two. Interestingly we see that the opinions for Consensus Extreme take on opinions that create a circle for each of the 30 simulations performed. Although, we can differentiate better the stages of Undecided and Polarization based on their spread where Polarization takes values closer to 2 (or 1 in both x and y) while, undecided sees a smaller spread. Notably the dynamics that are shown here utilize the Frobenius norm when defining the term $||\textbf{s}||$.

We can utilize additional matrix norms to convey dynamics that trend toward the extremes of opinions $s_{i}^{k}$ for $k=1,\hdots, N$. By examining Figure \ref{fig:3D_phase_diagrams} (E), we explore the possibility of other behaviors by varying the function given by the parameter $||\textbf{s}||$. The Frobenius norm gives us the shape of a circle where as we keep the agents bounded within $[-1,1]$ then the dynamic yields $||s|| = |s_{i}^{1}|^{2}+|s_{i}^{2}|^{2} = 1$. We can modify this by redefining $||s||$ by using the infinity-norm such that we have $||s|| = \max\{|s_{i}^{1}|,|s_{i}^{2}|\}$. This behavior should yield maximum values in both the $s^{1}, s^{2}$, which would result in a square around the coordinate pairs of $\{1,-1\}$. Lastly, we have the 1-norm which takes in the sum of the agents which we can define as $|s_{i}^{1}| + |s_{i}^{2}|=1$, where the shape takes the form as a square although the points are on the intercepts of both opinions on $\{-1,1\}$. Through variation, this would determine the final opinions of the agents and not the four behaviors as seen in Figure \ref{fig:3D_phase_diagrams} (A-D).

 \begin{figure}[H]
    \centering
    \includegraphics[width=\textwidth]{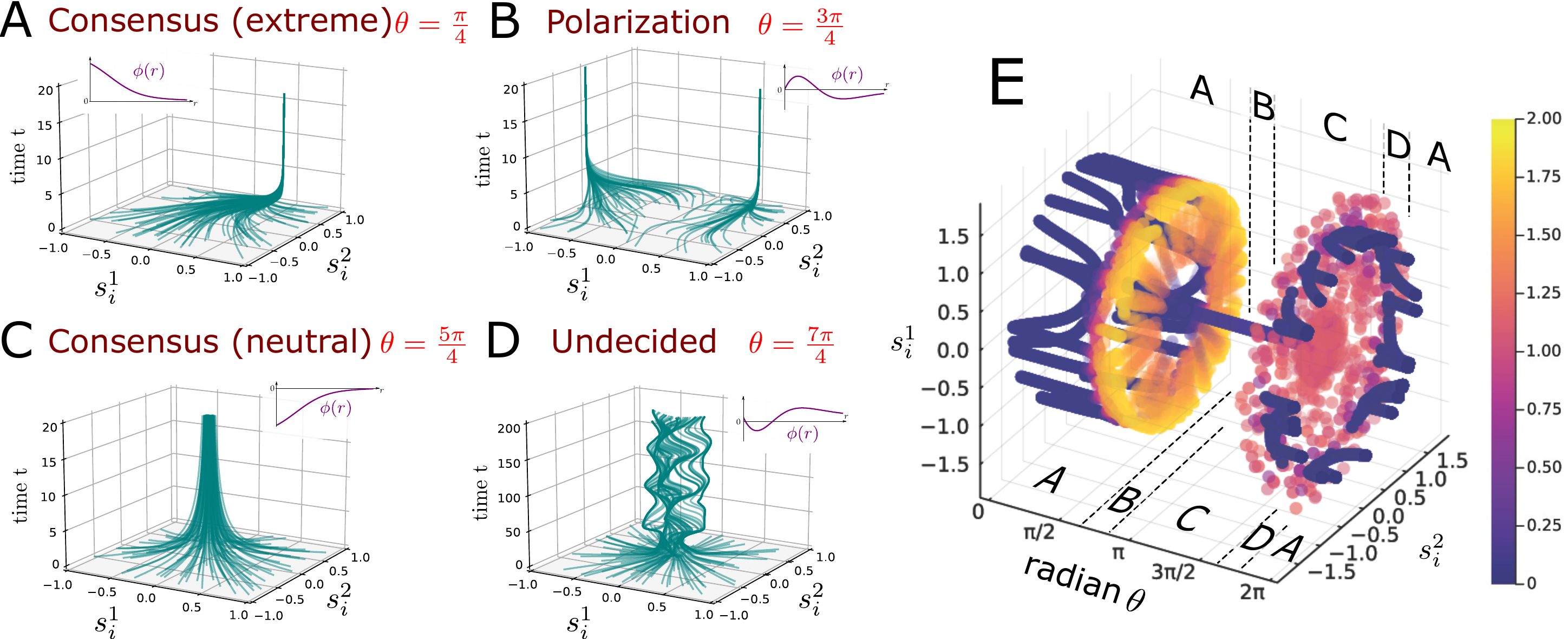}
    \caption{(A)-(D) Illustration of the fully non-linear Doppelg\"anger dynamics \eqref{eq:Doppelganger} in dimension $2$ using the Morse potential equation \eqref{eq:morse} using $N=100$ agents.\\
    (E) Phase Transitions by varying $\theta$}
    \label{fig:3D_phase_diagrams}
\end{figure}

\section{Conclusion}

In this manuscript, we introduce a new type of opinion model where opinions may be repelled not directly but rather by being attracted to the opposite opinions. We show numerically and in some cases analytically the emergence of distinct equilibrium such as polarization or consensus. Of particular interestd is the emergence of a phase transition between this different equilibrium as we gradually vary the interaction function.

Several directions require further exploration. On the empirical side, it is necessary to estimate the most appropriate form of the interaction function, $ϕ$, from data. Analytically, we were only able to prove convergence of the dynamics to an extreme consensus when $ϕ>0$. Therefore, studying other cases remains an open question. In terms of modeling, there are multiple ways to combine different opinions in higher dimensions. For instance, do two users disagree if only one of their opinions differs, or is disagreement necessary across several components?

In another context, it would be valuable to study the limiting case of an infinite number of opinions, $N→+∞$. In this case, the Doppelg\"anger model would reduce to a continuity equation on the density function $ρ$:
\begin{equation}
  ∂_t ρ(s,t) + ∇⋅\Big( (\overline{v}_{ρ}(s)-\overline{e}_{ρ}⋅s)ρ(s,t)\Big) = 0,
  \label{eq:PDE_rho}
\end{equation}
with:
\begin{eqnarray}
\overline{v}_{ρ}(s) &=& \frac{∫_{\tilde{s}∈ℝ^d}ϕ(s-\tilde{s})\tilde{s}ρ(\tilde{s})\,\dd \tilde{s}}{∫_{\tilde{s}∈ℝ^d}|ϕ(s-\tilde{s})|ρ(\tilde{s})\,\dd \tilde{s}} \\
  e_{ρ}^2 &=& ∫_{\tilde{s}∈ℝ^d} |\tilde{s}|^2 \rho(\tilde{s})\,\dd \tilde{s}.
\end{eqnarray}
Proving this limit as $N→+∞$ is quite challenging (see propagation of chaos [REF]). However, studying the long-term behavior of systems from a PDE perspective could be simpler and may provide greater insight into the phase transitions of the dynamics.

\newpage
\section{Appendix}

\subsection{Non-dominant Eigenvalue Fixed Graph}
Here we look at a unique case where we have no dominant eigenvalue. This creates a dynamic where the solution does not converge to a consensus or stable solution.
\begin{figure}[H]
  \centering
  \includegraphics[width=0.6\linewidth]{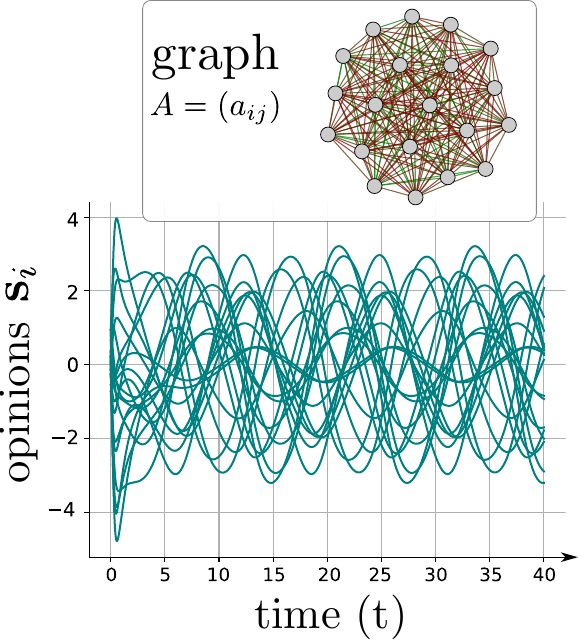}
  \caption{An example of the fixed graph Doppelg\"anger dynamics where the underlying network does not have a dominant eigenvector - the model does not converge (but remains bounded) and demonstrates oscillatory behavior.}
  \label{fig:Doppelganger_noconverge}
\end{figure}

\subsection{Bernoulli Differential Equation Solution}
\begin{lem}
  \label{lem:Bernouilli_ODE}
  Suppose $y(t)$ satisfies the following differential equation:
  \begin{equation}
    \label{eq:bernouilli_ODE}
    \left\{
    \begin{array}{rcccc}
      y' &=& (r-\delta(t))\cdot y - y^2 \\
      y(0) &=& y_0
    \end{array}
  \right.
\end{equation}
with $y_0>0$ and $\delta(t)$ continuous function satisfying $\delta(t)\stackrel{t \rightarrow +\infty}{\longrightarrow} 0$ and $\int_0^t|\delta(s)|\,d s\leq C$. Then:
\begin{equation}
  \label{eq:cv_y}
  y(t) \stackrel{t \rightarrow +\infty}{\longrightarrow} r.
\end{equation}
\end{lem}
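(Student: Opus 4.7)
The plan is to linearize \eqref{eq:bernouilli_ODE} via the classical Bernoulli substitution $u = 1/y$, solve the resulting linear first-order ODE with an integrating factor, and then extract the asymptotic limit using the integrability of $\delta$ together with $\delta(t)\to 0$.

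First, I would check that $y(t)>0$ for all $t\geq 0$ so that $u=1/y$ is well-defined. Since $y\equiv 0$ is a solution of \eqref{eq:bernouilli_ODE} and $y_0>0$, uniqueness of solutions forbids $y$ from ever vanishing, so $y(t)>0$ throughout. A direct computation then gives
\begin{equation*}
u' \;=\; -\frac{y'}{y^{2}} \;=\; -(r-\delta(t))\, u \;+\; 1,
\end{equation*}
a linear first-order ODE in $u$. Taking the integrating factor $\mu(t)=\exp\bigl(rt-D(t)\bigr)$ with $D(t):=\int_0^t\delta(s)\,ds$, the solution is explicit:
\begin{equation*}
u(t) \;=\; \frac{u(0) \;+\; \int_0^t \mu(s)\,ds}{\mu(t)}.
\end{equation*}

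For the long-time analysis, the hypothesis $\int_0^{\infty}|\delta(s)|\,ds\leq C$ implies that $D(t)$ is uniformly bounded, so $e^{\pm D(t)}$ stays between positive constants. In the setting of interest, $r>0$ (as otherwise the conclusion $y(t)\to r$ would contradict the positivity of $y$ just established), so $\mu(t)\to+\infty$ and hence $u(0)/\mu(t)\to 0$. For the remaining ratio, I would invoke L'H\^opital's rule:
\begin{equation*}
\lim_{t\to\infty}\frac{\int_0^t \mu(s)\,ds}{\mu(t)} \;=\; \lim_{t\to\infty}\frac{\mu(t)}{\mu'(t)} \;=\; \lim_{t\to\infty}\frac{1}{r-\delta(t)} \;=\; \frac{1}{r},
\end{equation*}
using $\delta(t)\to 0$. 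Consequently $u(t)\to 1/r$, equivalently $y(t)\to r$, which is the claim.

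The proof is short and I do not expect a serious obstacle. The two places that warrant some care are: (i) the positivity of $y$, which secures the Bernoulli substitution and is exactly why $y_0>0$ appears in the hypothesis; and (ii) the distinction between the two hypotheses on $\delta$ --- pointwise decay $\delta(t)\to 0$ is what feeds the L'H\^opital limit, while the $L^1$ bound $\int_0^{\infty}|\delta|\leq C$ is what keeps $D(t)$ bounded so that the integrating factor $\mu(t)=e^{rt-D(t)}$ genuinely blows up (rather than being masked by runaway contributions from $D$). Both ingredients are needed, and the proof simply makes each of them do its share.
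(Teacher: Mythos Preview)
Your proof is correct and follows the same overall strategy as the paper: both arguments write down the explicit solution of the Bernoulli equation (the paper states the closed form for $y$, you obtain it through the substitution $u=1/y$ --- these are the same formula, with the paper's $e^{F(t)}$ equal to your $\mu(t)$) and then analyze the limit of the ratio $\int_0^t \mu(s)\,ds\,/\,\mu(t)$.

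The one notable difference is how that limit is evaluated. The paper performs an integration by parts on $e^{-F(t)}\int_0^t e^{F(s)}\,ds$, which produces a main term $1/r$ plus two remainder terms, the second of which is handled via Lebesgue's dominated convergence theorem. Your route via L'H\^opital is shorter and more transparent: once you know $\mu(t)\to\infty$ (from the boundedness of $D(t)$) and $\mu'(t)=(r-\delta(t))\mu(t)$ is eventually nonvanishing (from $\delta(t)\to 0$), the limit drops out in one line. Both approaches use the two hypotheses on $\delta$ in exactly the same places --- the $L^1$ bound to control $D(t)$, the pointwise decay to extract the final limit --- so neither is more general, but yours is the more economical of the two. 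One minor remark: your justification that $r>0$ by appeal to the conclusion is slightly circular as written; it would be cleaner to simply observe (as the paper implicitly does) that the lemma is stated and used only for $r>0$.
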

\begin{proof}[lemma] Denote $f(t)=r-\delta(t)$, there is an explicit solution to the  Bernouilli differential equation \eqref{eq:bernouilli_ODE}: denoting $F$ an anti-derivative of $f$ (i.e. $F'(t) = f(t)$)
  \begin{displaymath}
    y(t) = \frac{e^{F(t)}}{C_0 + \int_0^t e^{F(s)}\,d s} = \frac{1}{C_0e^{-F(t)} + e^{-F(t)}\int_0^t e^{F(s)}\,d s}.
  \end{displaymath}
  Since $f(t)\stackrel{t \rightarrow +\infty}{\longrightarrow} r>0$, we deduce $F(t) = \int_0^t f(s)\,d s \stackrel{t \rightarrow +\infty}{\longrightarrow} +\infty$ and therefore $C_0e^{-F(t)}\stackrel{t \rightarrow +\infty}{\longrightarrow} 0$. Denote $m(t)=\int_0^t \delta(s)\,d s$ and notice that $m(t)\leq C$ since $\delta$ is integrable. We deduce using integration by parts:
  \begin{align*}
    &e^{-F(t)}\int_0^t e^{F(s)}\,d s \\
    &= e^{-r\cdot t}e^{-m(t)}\int_0^t e^{r\cdot s}e^{m(s)}\,d s\\
                                           &= e^{-r\cdot t}e^{-m(t)}\left(\left.\frac{e^{r\cdot s}}{r}e^{m(s)}\right|_0^t - \int_0^t \frac{e^{r\cdot s}}{r}\delta(s)e^{m(s)}\,d s   \right) \\
                                           &= \frac{1}{r} - \frac{e^{-r\cdot t}}{r}e^{-m(t)+m(0)}  -  e^{-r\cdot t}e^{-m(t)}\int_0^t \frac{e^{r\cdot s}}{r}\delta(s)e^{m(s)}\,d s\\
    &= \frac{1}{r}  -  A(t)  -  B(t).
  \end{align*}
  Since $m(t)$ is bounded, we deduce:
  \begin{displaymath}
    A(t) =  \frac{1}{r}e^{-r\cdot t}e^{-m(t)+m(0)} \stackrel{t \rightarrow +\infty}{\longrightarrow} 0.
  \end{displaymath}
  For the term $B(t)$, we use Lebesgue's dominated convergence theorem:
  \begin{align*}
    B(t) &= \frac{1}{r} \int_0^{+\infty} u_t(s)\,d s \\ \text{ with } u_t(s)&= e^{-r\cdot t}e^{s\cdot t}\mathds{1}_{[0,t]}(s) e^{-m(t)}e^{m(s)}\delta(s).
  \end{align*}
  The integrand satisfies for any $t>0$:
  \begin{displaymath}
    |u_t(s)| \leq  1 e^{2C}|\delta(s)| \quad \text{integrable}.
  \end{displaymath}
  Moreover, for any fix $s>0$, we have $\lim_{t\rightarrow+\infty} u_t(s) = 0$. Therefore we conclude that $B(t) \stackrel{t \rightarrow +\infty}{\longrightarrow} 0$. We deduce that:
  \begin{displaymath}
    y(t) = \frac{1}{C_0e^{-F(t)} + e^{-F(t)}\int_0^t e^{F(s)}\,d s} \stackrel{t \rightarrow +\infty}{\longrightarrow} \frac{1}{0+\frac{1}{r}}.
  \end{displaymath}
\end{proof}

\subsection{Extension of Fixed Graph in Higher Dimension}

Here let us redefine Equation \ref{eq:fixed_graph_Doppelganger} but extend to higher dimension $d\in \mathbb{N}$ such that we have the following:

\begin{equation}
\label{eq:higher_dim_Doppelganger}
    (s^{k})' = As^{k} - ||s|| s^{k}
\end{equation}
 where $k: 1, 2,\hdots , d$ and $||s|| = \sqrt{\frac{1}{N}\sum_{i,j}|s_{i}^{j}|^{2}}$ \\
\begin{theorem}
\label{thm:Doppelganger_fixed_graph_higher_dim}
  Consider the fixed-graph Doppelg\"anger model in dimension $d$ \ref{eq:higher_dim_Doppelganger}. Assume that $A$ has a dominant eigenvalue $r$ and the initial condition $s^{k}(t=0)$ for any $k\in[1,d]$ has a non-zero component along the associated Perron-Frobenius eigenvector. Then the solution $s^{k}(t)$ converges (exponentially fast) to a Perron-Frobenius eigenvector $\vect{u}_r$ of $A$.
\end{theorem}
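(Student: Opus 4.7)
The plan is to adapt the one-dimensional proof of Theorem \ref{thm:Doppelganger_fixed_graph} by running the eigenspace decomposition of $A$ simultaneously along each of the $d$ opinion coordinates, while exploiting a crucial symmetry: the nonlinear damping factor $\|\textbf{s}\|$ is a single scalar that appears identically in the ODE for every coordinate. Writing $\R^N = E_r \oplus E_s$ as in the 1D proof, I would decompose each coordinate vector as
\[ s^k(t) = \alpha_k(t)\,\vect{u}_r + \vect{v}_k(t), \qquad \vect{v}_k(t) \in E_s,\quad k = 1,\ldots,d, \]
which produces the coupled system $\alpha_k' = (r - \|\textbf{s}\|)\,\alpha_k$ and $\vect{v}_k' = A_{|E_s}\vect{v}_k - \|\textbf{s}\|\,\vect{v}_k$. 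The transverse components are handled exactly as in Theorem \ref{thm:Doppelganger_fixed_graph}: boundedness of $\|\textbf{s}\|$ (which holds in any dimension) combined with Duhamel's formula for any single $\alpha_k$ yields a lower bound $rt - C \leq \int_0^t \|\textbf{s}(z)\|\,dz$, and the spectral gap $\|A_{|E_s}\|_{\sim} \leq r - \varepsilon$ then produces the uniform exponential decay $\|\vect{v}_k(t)\| \leq \exp(C - \varepsilon t)$ for every $k$.

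Next I would control $\|\textbf{s}\|$ in terms of the finite vector $\vec\alpha(t) := (\alpha_1(t),\ldots,\alpha_d(t)) \in \R^d$. Since $\|\textbf{s}\|^2 = \sum_{k=1}^d \|s^k\|^2$ and each $\|s^k\| = |\alpha_k| + \delta_k(t)$ with $|\delta_k(t)| \leq \|\vect{v}_k(t)\|$ by the triangle inequality (after normalizing $\|\vect{u}_r\|=1$), the uniform exponential decay of the $\vect{v}_k$ yields $\|\textbf{s}(t)\| = |\vec\alpha(t)| + \varepsilon(t)$ with $\varepsilon(t) \stackrel{t\to+\infty}{\longrightarrow} 0$ exponentially (hence integrable in time). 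The key observation is that all the scalar ODEs for the $\alpha_k$ share the same multiplicative factor $r - \|\textbf{s}\|$, so their mutual ratios are time-invariant. Setting $\alpha_k(t) = c_k\, y(t)$ with $c_k := \alpha_k(0)/\alpha_1(0)$ (all nonzero by hypothesis, $c_1=1$), one obtains $|\vec\alpha(t)| = \sqrt{K}\,|y(t)|$ where $K := \sum_k c_k^2 > 0$.

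Finally, assuming without loss of generality that $y(0) > 0$, the function $y(t)$ satisfies
\[ y'(t) = (r - \|\textbf{s}\|)\,y = (r - \varepsilon(t))\,y - \sqrt{K}\,y^2, \]
and the rescaling $z(t) := \sqrt{K}\,y(t)$ brings this exactly to the form $z' = (r - \varepsilon(t))\,z - z^2$ treated by Lemma \ref{lem:Bernouilli_ODE}. That lemma delivers $z(t) \to r$, hence $y(t) \to r/\sqrt{K}$ and $\alpha_k(t) \to c_k r / \sqrt{K}$; combined with the decay $\vect{v}_k(t) \to 0$ we conclude $s^k(t) \to (c_k r /\sqrt{K})\,\vect{u}_r$ for each $k$, a nonzero scalar multiple of the Perron-Frobenius eigenvector. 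The main obstacle I anticipate is not the ratio-preservation trick (which is clean once noticed) but rather verifying that the remainder $\varepsilon(t)$ above is genuinely integrable in $t$ and not merely $o(1)$, since Lemma \ref{lem:Bernouilli_ODE} requires both conditions; this reduces to the uniform-in-$k$ exponential decay of the $\vect{v}_k$, which the 1D argument already supplies verbatim.
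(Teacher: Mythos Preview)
Your proposal is correct and follows the same skeleton as the paper's proof: decompose each coordinate vector along $E_r\oplus E_s$, use Duhamel on one $\alpha_k$ together with the dimension-independent boundedness (Proposition~\ref{prop:dopp_bounded}) to lower-bound $\int_0^t\|\textbf{s}\|$, exploit the spectral gap to kill every $\vect{v}_k$ exponentially, and finish via Lemma~\ref{lem:Bernouilli_ODE}.

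The one place where you are more careful than the paper is the reduction to a single Bernoulli equation. The paper writes $\|\textbf{s}(t)\| = \alpha^k(t) + \delta(t)$ for each $k$ and then plugs this directly into $(\alpha^k)' = (r-\|\textbf{s}\|)\alpha^k$ to get $(\alpha^k)' = (r-\delta)\alpha^k - (\alpha^k)^2$, which tacitly treats $\|\textbf{s}\|$ as if only the $k$th coordinate contributed. Your ratio-preservation observation---that all $\alpha_k$ share the same multiplicative factor $r-\|\textbf{s}\|$, so $\alpha_k(t)=c_k\,y(t)$ and $|\vec\alpha(t)|=\sqrt{K}\,|y(t)|$---is exactly what makes this step rigorous in dimension $d>1$, and your rescaling $z=\sqrt{K}\,y$ then legitimately lands in the setting of Lemma~\ref{lem:Bernouilli_ODE}. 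Your concern about the integrability of $\varepsilon(t)$ is well placed and, as you note, follows from the uniform exponential decay of the $\vect{v}_k$.
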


\begin{proof}
     We keep the composition of $\Rn$ as the union of eigenspaces with the dominant eigenspace denoted as $E^{r}$. The solution to the Doppelgänger model retains the decomposition:
    \begin{equation*}
        s(t) = \alpha(t)u_{t} +v(t)
    \end{equation*}
    although now that we extend to dimension d, we rewrite the equation using the following:
    \begin{equation}
        \label{eq:fixed_sol_in_d}
        s^{k}(t) = \alpha^{k}(t)u_{r}  + v^{k}(t)
    \end{equation}
    where k is defined similarly in Equation \ref{eq:higher_dim_Doppelganger}. Substituting equation \ref{eq:fixed_sol_in_d} into the model equations.
    \begin{align*}
        (\alpha^{k})'(t) u_{r}+v'(t) &= A\big[\alpha^{k}(t)u_{r}+v^{k}(t)\big]-|\lvert s \rvert| \big[\alpha (t)u_{r}+v(t)\big] \\
        &\implies \left\{
         \begin{array}{cc}
             (\alpha^{k})' u_{r} &= r\alpha^{k}u_{r} - ||s|| \alpha^{k}u_{r} \\
             (v^{k})' &= A_{|E_{s}}v^{k} - ||s|| v^{k}
         \end{array}
         \right.\\
    \end{align*}

We maintain that the solution is bounded by Proposition \ref{prop:dopp_bounded} since we retain the same $\phi$ (a bounded function) for the fixed Doppelg\"anger in higher dimension. The proof follows similarly with adjustments such that:
\begin{align*}
      \alpha ^{k}(t) &= \exp\left(r t - \int_0^t μ(\textbf{s}(z))\, dz\right)\alpha ^{k}(0)\\
      \alpha^{k}(t) &\leq C \\
      rt - C &\leq \int_{0}^{t} ||s(z)||dz
\end{align*}
Moreover, the solution for $\norm{v^{k}(t)} \stackrel{t \rightarrow +\infty}{\longrightarrow} 0$ exponentially by using the conditions above. Looking at convergence in time we look into the convergence of $\alpha^{k}(t)$
\begin{align*}
    \lim_{t\rightarrow \infty}||s(t)|| &= \lim_{t\rightarrow \infty}\bigg[\norm{\alpha^{k}(t)u_{r}}+\delta(t)\bigg]\\
    \implies |\delta(t)| &= \bigg| ||\alpha^{k}(t)u_{r} + v^{k}(t)|| - ||\alpha^{k}(t)u_{r}||\bigg|\\
    \implies |\delta(t)|&\leq v^{k}(t) \rightarrow 0
\end{align*}
By utilizing the normalization of eigenvector $u_{r}$ it holds that $||s(t)|| = \alpha^{k}(t)+\delta(t)$ where by substituting into equation \ref{eq:higher_dim_Doppelganger} we get $$(\alpha^{k})' = (r-\delta(t))\alpha^{k}-(\alpha^{k})^2$$
which is has an explicit solution and converges to r, by Lemma \ref{lem:Bernouilli_ODE}, when $\delta(t)$ tends to zero.
\end{proof}

\newpage
\bibliography{doppelganger}

\begin{thebibliography}{10}

\bibitem{allcott_social_2017}
Hunt Allcott and Matthew Gentzkow.
\newblock Social {Media} and {Fake} {News} in the 2016 {Election}.
\newblock {\em Journal of Economic Perspectives}, 31(2):211--236, May 2017.

\bibitem{an_sharing_2014}
Jisun An, Daniele Quercia, Meeyoung Cha, Krishna Gummadi, and Jon Crowcroft.
\newblock Sharing political news: the balancing act of intimacy and socialization in selective exposure.
\newblock {\em EPJ Data Science}, 3(1):12, December 2014.

\bibitem{anderson2001universal}
Robert~H Anderson, Tora~K Bikson, Sally~Ann Law, and Bridger~M Mitchell.
\newblock Universal access to email: feasibility and societal implications.
\newblock In {\em The digital divide: facing a crisis or creating a myth?}, pages 243--262. 2001.

\bibitem{bakshy_exposure_2015}
Eytan Bakshy, Solomon Messing, and Lada~A. Adamic.
\newblock Exposure to ideologically diverse news and opinion on {Facebook}.
\newblock {\em Science}, 348(6239):1130--1132, June 2015.

\bibitem{ballerini2008interaction}
Michele Ballerini, Nicola Cabibbo, Raphael Candelier, Andrea Cavagna, Evaristo Cisbani, Irene Giardina, Vivien Lecomte, Alberto Orlandi, Giorgio Parisi, Andrea Procaccini, et~al.
\newblock Interaction ruling animal collective behavior depends on topological rather than metric distance: Evidence from a field study.
\newblock {\em Proceedings of the national academy of sciences}, 105(4):1232--1237, 2008.

\bibitem{barbera_birds_2015}
Pablo Barberá.
\newblock Birds of the {Same} {Feather} {Tweet} {Together}: {Bayesian} {Ideal} {Point} {Estimation} {Using} {Twitter} {Data}.
\newblock {\em Political Analysis}, 23(1):76--91, 2015.

\bibitem{barbera_tweeting_2015}
Pablo Barberá, John~T. Jost, Jonathan Nagler, Joshua~A. Tucker, and Richard Bonneau.
\newblock Tweeting {From} {Left} to {Right}: {Is} {Online} {Political} {Communication} {More} {Than} an {Echo} {Chamber}?
\newblock {\em Psychological Science}, 26(10):1531--1542, October 2015.

\bibitem{baumann_modeling_2020}
Fabian Baumann, Philipp Lorenz-Spreen, Igor~M. Sokolov, and Michele Starnini.
\newblock Modeling echo chambers and polarization dynamics in social networks.
\newblock {\em Physical Review Letters}, 124(4):048301, January 2020.
\newblock arXiv: 1906.12325.

\bibitem{ben-naim_unity_2003}
E.~Ben-Naim, P.~L. Krapivsky, F.~Vazquez, and S.~Redner.
\newblock Unity and discord in opinion dynamics.
\newblock {\em Physica A: Statistical Mechanics and its Applications}, 330(1):99--106, 2003.

\bibitem{bessi_homophily_2016}
Alessandro Bessi, Fabio Petroni, Michela~Del Vicario, Fabiana Zollo, Aris Anagnostopoulos, Antonio Scala, Guido Caldarelli, and Walter Quattrociocchi.
\newblock Homophily and polarization in the age of misinformation.
\newblock {\em The European Physical Journal Special Topics}, 225(10):2047--2059, October 2016.

\bibitem{blondel_krauses_2009}
V.~Blondel, J.~Hendrickx, and J.~Tsitsiklis.
\newblock On {Krause}'s multi-agent consensus model with state-dependent connectivity.
\newblock {\em IEEE transactions on Automatic Control}, 54(11):2586--2597, 2009.

\bibitem{bond_quantifying_2015}
Robert Bond and Solomon Messing.
\newblock Quantifying {Social} {Media}’s {Political} {Space}: {Estimating} {Ideology} from {Publicly} {Revealed} {Preferences} on {Facebook}.
\newblock {\em American Political Science Review}, 109(1):62--78, February 2015.

\bibitem{buchstein1997bytes}
Hubertus Buchstein.
\newblock Bytes that bite: The internet and deliberative democracy.
\newblock {\em Constellations}, 4(2):248--263, 1997.

\bibitem{camazine2003self}
Scott Camazine, Jean-Louis Deneubourg, Nigel~R Franks, James Sneyd, Eric Bonabeau, and Guy Theraula.
\newblock {\em Self-organization in biological systems}.
\newblock Princeton University Press, 2003.

\bibitem{castellano_statistical_2009}
C.~Castellano, S.~Fortunato, and V.~Loreto.
\newblock Statistical physics of social dynamics.
\newblock {\em Reviews of Modern Physics}, 81(2):591--646, May 2009.

\bibitem{cinelli_selective_2020}
Matteo Cinelli, Emanuele Brugnoli, Ana~Lucia Schmidt, Fabiana Zollo, Walter Quattrociocchi, and Antonio Scala.
\newblock Selective exposure shapes the {Facebook} news diet.
\newblock {\em PLOS ONE}, 15(3):e0229129, March 2020.

\bibitem{cinelli_echo_2021}
Matteo Cinelli, Gianmarco De~Francisci Morales, Alessandro Galeazzi, Walter Quattrociocchi, and Michele Starnini.
\newblock The echo chamber effect on social media.
\newblock {\em Proceedings of the National Academy of Sciences}, 118(9), March 2021.

\bibitem{cota_quantifying_2019}
Wesley Cota, Silvio~C. Ferreira, Romualdo Pastor-Satorras, and Michele Starnini.
\newblock Quantifying echo chamber effects in information spreading over political communication networks.
\newblock {\em EPJ Data Science}, 8(1):1--13, December 2019.
\newblock Number: 1 Publisher: SpringerOpen.

\bibitem{cucker_emergent_2007}
F.~Cucker and S.~Smale.
\newblock Emergent {Behavior} in {Flocks}.
\newblock {\em IEEE Transactions on Automatic Control}, 52(5):852--862, May 2007.

\bibitem{dahlberg1998cyberspace}
Lincoln Dahlberg.
\newblock Cyberspace and the public sphere: Exploring the democratic potential of the net.
\newblock {\em Convergence}, 4(1):70--84, 1998.

\bibitem{dean2003net}
Jodi Dean.
\newblock Why the net is not a public sphere.
\newblock {\em Constellations}, 10(1):95--112, 2003.

\bibitem{del_vicario_spreading_2016}
Michela Del~Vicario, Alessandro Bessi, Fabiana Zollo, Fabio Petroni, Antonio Scala, Guido Caldarelli, H.~Eugene Stanley, and Walter Quattrociocchi.
\newblock The spreading of misinformation online.
\newblock {\em Proceedings of the National Academy of Sciences}, 113(3):554--559, January 2016.

\bibitem{fishkin2000virtual}
James~S Fishkin.
\newblock Virtual democratic possibilities: Prospects for internet democracy.
\newblock In {\em conference on Internet, Democracy, and Public Goods, Belo Horizonte, Brazil. Retrieved August}, volume~13, page 2006, 2000.

\bibitem{gaisbauer_ideological_2021}
Felix Gaisbauer, Armin Pournaki, Sven Banisch, and Eckehard Olbrich.
\newblock Ideological differences in engagement in public debate on {Twitter}.
\newblock {\em PLOS ONE}, 16(3):e0249241, March 2021.

\bibitem{garimella_effect_2017}
Kiran Garimella, Gianmarco De~Francisci~Morales, Aristides Gionis, and Michael Mathioudakis.
\newblock The {Effect} of {Collective} {Attention} on {Controversial} {Debates} on {Social} {Media}.
\newblock In {\em Proceedings of the 2017 {ACM} on {Web} {Science} {Conference}}, pages 43--52, Troy New York USA, June 2017. ACM.

\bibitem{garimella_political_2018}
Kiran Garimella, Gianmarco De~Francisci~Morales, Aristides Gionis, and Michael Mathioudakis.
\newblock Political {Discourse} on {Social} {Media}: {Echo} {Chambers}, {Gatekeepers}, and the {Price} of {Bipartisanship}.
\newblock In {\em Proceedings of the 2018 {World} {Wide} {Web} {Conference}}, {WWW} '18, pages 913--922, Republic and Canton of Geneva, Switzerland, 2018. International World Wide Web Conferences Steering Committee.

\bibitem{gromping_echo_2014}
Max Grömping.
\newblock ‘{Echo} {Chambers}’: {Partisan} {Facebook} {Groups} during the 2014 {Thai} {Election}.
\newblock {\em Asia Pacific Media Educator}, 24(1):39--59, June 2014.

\bibitem{hegselmann_opinion_2002}
R.~Hegselmann and U.~Krause.
\newblock Opinion dynamics and bounded confidence: models, analysis and simulation.
\newblock {\em Journal of Artificial Societies and Social Simulation}, 5(3), 2002.

\bibitem{hegselmann2002opinion}
Rainer Hegselmann, Ulrich Krause, et~al.
\newblock Opinion dynamics and bounded confidence models, analysis, and simulation.
\newblock {\em Journal of artificial societies and social simulation}, 5(3), 2002.

\bibitem{kelly2005debate}
John Kelly, Danyel Fisher, and Marc Smith.
\newblock Debate, division, and diversity: Political discourse networks in usenet newsgroups.
\newblock In {\em Online Deliberation Conference}, volume 2005, pages 4--3. Stanford University, 2005.

\bibitem{kuhn1996effects}
Deanna Kuhn and Joseph Lao.
\newblock Effects of evidence on attitudes: Is polarization the norm?
\newblock {\em Psychological Science}, 7(2):115--120, 1996.

\bibitem{li_bounded_2020}
GuanLin Li, Sebastien Motsch, and Dylan Weber.
\newblock Bounded confidence dynamics and graph control: {Enforcing} consensus.
\newblock {\em Networks \& Heterogeneous Media}, 15(3):489, 2020.

\bibitem{lopez2012behavioural}
Ugo Lopez, Jacques Gautrais, Iain~D Couzin, and Guy Theraulaz.
\newblock From behavioural analyses to models of collective motion in fish schools.
\newblock {\em Interface focus}, 2(6):693--707, 2012.

\bibitem{lord1979biased}
Charles~G Lord, Lee Ross, and Mark~R Lepper.
\newblock Biased assimilation and attitude polarization: The effects of prior theories on subsequently considered evidence.
\newblock {\em Journal of personality and social psychology}, 37(11):2098, 1979.

\bibitem{lorenz_continuous_2007}
J.~Lorenz.
\newblock Continuous opinion dynamics under bounded confidence: a survey.
\newblock {\em International Journal of Modern Physics C}, 18(12):1819--1838, December 2007.

\bibitem{miller1993attitude}
Arthur~G Miller, John~W McHoskey, Cynthia~M Bane, and Timothy~G Dowd.
\newblock The attitude polarization phenomenon: Role of response measure, attitude extremity, and behavioral consequences of reported attitude change.
\newblock {\em Journal of Personality and Social Psychology}, 64(4):561, 1993.

\bibitem{motsch_heterophilious_2014}
S.~Motsch and E.~Tadmor.
\newblock Heterophilious {Dynamics} {Enhances} {Consensus}.
\newblock {\em SIAM Review}, 56(4):577--621, January 2014.

\bibitem{monsted_characterizing_2022}
Bjarke Mønsted and Sune Lehmann.
\newblock Characterizing polarization in online vaccine discourse—{A} large-scale study.
\newblock {\em PLOS ONE}, 17(2):e0263746, February 2022.

\bibitem{nyhan2010corrections}
Brendan Nyhan and Jason Reifler.
\newblock When corrections fail: The persistence of political misperceptions.
\newblock {\em Political Behavior}, 32(2):303--330, 2010.

\bibitem{nyhan2014effective}
Brendan Nyhan, Jason Reifler, Sean Richey, and Gary~L Freed.
\newblock Effective messages in vaccine promotion: a randomized trial.
\newblock {\em Pediatrics}, 133(4):e835--e842, 2014.

\bibitem{olfati-saber_consensus_2007}
R.~Olfati-Saber, J.~A. Fax, and R.~M. Murray.
\newblock Consensus and {Cooperation} in {Networked} {Multi}-{Agent} {Systems}.
\newblock {\em Proceedings of the IEEE}, 95(1):215--233, January 2007.

\bibitem{pariser2011filter}
Eli Pariser.
\newblock {\em The filter bubble: How the new personalized web is changing what we read and how we think}.
\newblock Penguin, 2011.

\bibitem{price2002online}
Vincent Price and Joseph~N Cappella.
\newblock Online deliberation and its influence: The electronic dialogue project in campaign 2000.
\newblock {\em It \& Society}, 1(1):303--329, 2002.

\bibitem{quattrociocchi_echo_2016}
Walter Quattrociocchi, Antonio Scala, and Cass~R. Sunstein.
\newblock Echo {Chambers} on {Facebook}.
\newblock {SSRN} {Scholarly} {Paper} ID 2795110, Social Science Research Network, Rochester, NY, June 2016.

\bibitem{rheingold2000virtual}
Howard Rheingold.
\newblock {\em The virtual community, revised edition: Homesteading on the electronic frontier}.
\newblock MIT press, 2000.

\bibitem{saber_consensus_2003}
R.~Saber and R.~Murray.
\newblock Consensus protocols for networks of dynamic agents.
\newblock In {\em Proceedings of the 2003 {American} {Control} {Conference}, 2003}, volume~2, pages 951--956. IEEE, Piscatway, NJ, June 2003.

\bibitem{schmidt_polarization_2018}
Ana~Lucía Schmidt, Fabiana Zollo, Antonio Scala, Cornelia Betsch, and Walter Quattrociocchi.
\newblock Polarization of the vaccination debate on {Facebook}.
\newblock {\em Vaccine}, 36(25):3606--3612, June 2018.

\bibitem{schmidt_anatomy_2017}
Ana~Lucía Schmidt, Fabiana Zollo, Michela~Del Vicario, Alessandro Bessi, Antonio Scala, Guido Caldarelli, H.~Eugene Stanley, and Walter Quattrociocchi.
\newblock Anatomy of news consumption on {Facebook}.
\newblock {\em Proceedings of the National Academy of Sciences}, 114(12):3035--3039, March 2017.

\bibitem{spanos_dynamic_2005}
D.~Spanos, R.~Olfati-Saber, and R.~Murray.
\newblock Dynamic consensus on mobile networks.
\newblock In {\em {IFAC} world congress}, pages 1--6. Citeseer, 2005.

\bibitem{republic_com_sunstein}
Cass~R. Sunstein.
\newblock {\em Republic.Com}.
\newblock Princeton University Press, USA, 2001.

\bibitem{sunstein2004democracy}
Cass~R Sunstein.
\newblock Democracy and filtering.
\newblock {\em Communications of the ACM}, 47(12):57--59, 2004.

\bibitem{vicsek2012collective}
Tam{\'a}s Vicsek and Anna Zafeiris.
\newblock Collective motion.
\newblock {\em Physics reports}, 517(3-4):71--140, 2012.

\bibitem{vosoughi_spread_2018}
Soroush Vosoughi, Deb Roy, and Sinan Aral.
\newblock The spread of true and false news online.
\newblock {\em Science}, 359(6380):1146--1151, March 2018.

\bibitem{vasquez_i_2021}
Camilla Vásquez.
\newblock “\textit{{I} appreciate u not being a total prick …”}: {Oppositional} stancetaking, impoliteness and relational work in adversarial {Twitter} interactions.
\newblock {\em Journal of Pragmatics}, 185:40--53, November 2021.

\bibitem{weber_deterministic_2019}
Dylan Weber, Ryan Theisen, and Sebastien Motsch.
\newblock Deterministic {Versus} {Stochastic} {Consensus} {Dynamics} on {Graphs}.
\newblock {\em Journal of Statistical Physics}, 176(1):40--68, July 2019.

\bibitem{wu_cross-partisan_2021}
Siqi Wu and Paul Resnick.
\newblock Cross-{Partisan} {Discussions} on {YouTube}: {Conservatives} {Talk} to {Liberals} but {Liberals} {Don}'t {Talk} to {Conservatives}.
\newblock {\em Proceedings of the International AAAI Conference on Web and Social Media}, 15:808--819, May 2021.

\bibitem{xia_opinion_2011}
H.~Xia, H.~Wang, and Z.~Xuan.
\newblock Opinion {Dynamics}: {A} {Multidisciplinary} {Review} and {Perspective} on {Future} {Research}.
\newblock {\em International Journal of Knowledge and Systems Science (IJKSS)}, 2(4):72--91, October 2011.

\bibitem{zollo_debunking_2017}
Fabiana Zollo, Alessandro Bessi, Michela~Del Vicario, Antonio Scala, Guido Caldarelli, Louis Shekhtman, Shlomo Havlin, and Walter Quattrociocchi.
\newblock Debunking in a world of tribes.
\newblock {\em PLOS ONE}, 12(7):e0181821, July 2017.

\end{thebibliography}
\bibliographystyle{plain}

\end{document}